\newcommand{\Mod}[1]{\ (\mathrm{mod}\ #1)}
\newcommand\al\alpha
\newcommand\be\beta
\newcommand\de\delta
\newcommand\la\lambda
\newcommand\tha\theta
\newcommand\iy\infty
\newcommand\bma{\begin{pmatrix}}
\newcommand\ema{\end{pmatrix}}
\newtheorem{theo}{Theorem}
\newtheorem{pro}{Proposition}
\newtheorem{Proposition}{Proposition}
\title{$C_\lambda$- Extended oscillator algebra and  $d$-orthogonal  polynomials}
\author{Fethi Bouzeffour $^{\ast, \diamond}$, Wissem Jedidi $^{\S, \diamond}$}
\address{$^\ast$ Department of mathematics, College of Sciences, King Saud University,  P. O Box 2455 Riyadh 11451, Saudi Arabia.}
\address{$^\S$ Department of Statistics \& OR, King Saud University, P.O. Box 2455, Riyadh 11451, Saudi Arabia}
\address{$^\diamond$ Universit\'e de Tunis El Manar, Facult\'e des Sciences de Tunis, LR11ES11 Laboratoire d'Analyse Math\'ematiques et Applications, 2092, Tunis, Tunisia.} \email{fbouzaffour@ksu.edu.sa; wissem.jedidi@fst.utm.tn}
\begin{document}
 \maketitle

{\bf Abstract:} In this paper we first construct an analytic  realization of the $C_\lambda$-extended oscillator algebra with the help of difference-differential operators.
 Secondly, we study families of  $d$-orthogonal polynomials which are extensions of the Hermite and Laguerre polynomials. The underlying algebraic framework allowed us a systematic derivation of their main properties such as recurrence relations, difference-differential equations, lowering and rising operators and generating functions.  Finally,  we use these polynomials to construct a realization of the $C_\lambda$-extended oscillator by block matrices.
\vskip0.2cm
{\bf Keywords}: vector orthogonal polynomials, deformed oscillator algebra .
\section{Introduction}
In \cite{Quesne}, Quesne defined the $C_\lambda$-extended oscillator algebra ($\lambda$ being an integer bigger than $1$) as the associative algebra generated by four elements $a_-,$ $a_+,$ $N$, $s$  that obey the commutation relations
\begin{align} \label{s}
&[a_-,\,a_+] = 1+\sum_{i=1}^{\lambda-1}\,\beta_i \; s^i,\qquad s^{\lambda} = 1, \\
&[N,\,a_-] = - a_-,\qquad [N,\,a_+] = a_+,\qquad [N,\,s]=0,\\
&a_-s = \varepsilon_\lambda\; s \; a_- \qquad a_+ s =\varepsilon_\lambda^{-1}\; s \; a_+,\qquad \varepsilon_\lambda=e^{\frac{2i\pi}{\lambda}},\\
&N^* = N,\qquad a_+^*=a_-,\qquad s^*=s^{-1},
\end{align}
where the constants structure $\beta_1,\dots,\,\beta_{\lambda-1}$ are complex numbers restricted by the conditions
\begin{equation}
\overline{\beta_i}=\beta_{\lambda-i},\quad i=1,\,\dots, \lambda.\label{cc1}
\end{equation}
The connection between orthogonal polynomials and harmonic oscillator as well as the quantum algebra is well known \cite{Vilenkin, Koor}. The oscillator algebra of creation, annihilation, and number operators plays a central role in the investigation of many class of orthogonal polynomials, and provides a useful tool to derive  their operational properties such as recurrence relations, difference equation, lowering and rising operators and generating functions. It particular,
when  $\lambda=2$, the $C_\lambda$-extended Heisenberg algebra is reduced to the Calogero--Vasiliev algebra \cite{8},  which is given
 by the generators  $a_-,\,a_+,\, R$, $1$ that satisfy  the commutation relations
\begin{eqnarray*}
[a_-,\,a_+] &=& 1+2\nu R,\qquad R^2=1,\\
\{a_\pm,R\}&=& 0,\qquad [1,\,a_\pm]= [1,\,R]=0.
\end{eqnarray*}
A realization of the Calogero--Vasiliev algebra working in the
Schr\"{o}dinger representation, $\Psi =\Psi(x)$,  is given by the operators
\begin{equation}\label{H4}
a_-=\frac{1}{\sqrt{2}}(Y_\nu +x),\quad a_+=\frac{1}{\sqrt{2}}(-Y_\nu+x),
\end{equation}
where  $Y_\nu$ is the {\em Dunkl operator} (corresponding  to the root system $A_1$, see \cite[Definition 4.4.2)]{18}), which is a differential-difference
operator, depending on a parameter $\nu\in \mathbb{R}$ and acting on $C^\infty(\mathbb{R})$ as follows:
\begin{equation}
Y_\nu :=\frac{d}{dx}+\,\frac{\nu}x(1-R),
\label{72}
\end{equation}
where $R$ is the Klein operator acting on function $\psi$ of the real variable $x$ as
\begin{equation}
(R\psi)(x)=\psi(-x)
\end{equation}
(see \cite{8} for more details).
Note that the operator $Y_\nu$ is also related by a simple similarity transformation to the Yang-Dunkl operator used in \cite{8}, were the authors show that the Calogero--Vasiliev algebra is intimately related to parabosons and parafermions, and to the $osp (1|2)$ and $osp (2|2)$ superalgebras. \\

The  Hamiltonian  takes the form
\begin{align}
H_\nu&=-\frac{1}{2}Y_\nu^2+\frac{1}{2}x^2=-\frac{1}{2}\frac{d^2}{dx^2}-\frac{\nu}x\,\frac{d}{dx}\,+\frac{\nu}{2x^2}
(1-R)+\frac{1}{2}x^2
\end{align}
and the  wave functions corresponding to the well-known eigenvalues
\begin{equation}
\lambda_n=n+\nu+\frac{1}{2},\quad n=0,\,1,\,2,\dots
\end{equation}
are given by
\begin{equation*}
\psi^{(\nu)}_n(x)= \gamma_n\,  e^{-x^2/2}H^{(\nu)}_n(x),
\end{equation*}
where
\begin{equation}  \label{constant3}
\gamma_n= \left(2^{2n}\; \Gamma([\frac{n}{2}]+1) \;\Gamma([\frac{n+1}{2})]+\nu+\frac{1}{2}) \right)^{-\frac{1}{2}}
\end{equation}
and  $H^{(\nu)}_n(x)$ is the generalized Hermite polynomial introduced by Szeg\"{o} \cite{z} and obtained in \cite{z} from Laguerre polynomials  $L_n^\nu(x)$  by the means:
 \begin{equation} \begin{cases}
     H_{2n}^{(\nu)}(x) = (-1)^n \,2^{2n} \, n!\,\, L_n^{\nu-\frac{1}{2}}(x^2),\\
     H_{2n+1}^{(\nu)}(x) =(-1)^n \, 2^{2n+1}\, n! \,\,x\,L_n^{\nu+\frac{1}{2}}(x^2).
   \end{cases}\label{hermout1}\end{equation}

It is well known that for $\nu>-\frac{1}{2},$ these polynomials satisfy the orthogonality relations:
\begin{equation}
\int_{\mathbb{R}}H^{(\nu)}_n(x)
H^{(\nu)}_m(x)|x|^{2\nu}e^{-x^2}\,dx= \frac{1}{\gamma_n^2}\,\delta_{n\,m},\label{orth}
\end{equation}
where $\gamma_n$ is given in \eqref{constant3}. Many of the known generalized Hermite polynomials  are also the eigenfunctions of the energy operator for a deformed oscillator (see \cite{Flor, Askey}).  \\

In this paper, we discuss the connection of some class of $d$-orthogonal polynomials
with the $C_\lambda$-extended oscillator algebra (for $\lambda=d+1$).  The $d$-orthogonal polynomials can be obtained from general multiple orthogonal polynomials under some restrictions upon their parameters (see \cite{Van Assche}). Applications of  $d$-orthogonal polynomials include the simultaneous Pad\'e approximation problem where the   multiple orthogonal polynomials appear  (see \cite{Beukers}).  Also they play important role in random matrix theory  (see \cite{Akemann}). The $d$-orthogonal polynomials have been intensively studied in the last 30 years due to their intriguing properties and applications (see \cite{Ben,Douak} and further references in the literature).\\

One problem that deserves attention is to relate $d$-orthogonal polynomials to some oscillator algebras. In  \cite{zhe,bouz1,bouz2}, the authors have found some $d$-orthogonal polynomials related to the deformed harmonic oscillator and share a number of properties  with the classical orthogonal polynomials. In this paper, we discuss the connection of some class of $d$-orthogonal polynomials with the $C_\lambda$-extended oscillator algebra $(\lambda=d+1)$. We use a Klein type operator $S$ of finite order to construct  a realization of the $C_\lambda$-extended oscillator algebra in terms of  difference-differential operator and a system  of vector orthogonal polynomials obtained from $d$-orthogonal polynomials, in order to provide realizations of the $C_\lambda$-extended oscillator by block matrices. Note that the obtained $d$-orthogonal polynomials  are eigenfunctions of the operator $S$.\\

The paper is organized as follows. In Section 2, we   review the definition of the $C_\lambda$-extended oscillator algebra  and  introduce a Bergmann realization associated  to this oscillator. In section 3, we deal with the formalism of the exponential of the Dunkl type operators, which are the most commonly used operators in the context of evolution problems. We establish the relevant rules  to the action of an exponential operator on a given function and those  for the disentanglement of exponential operators. We also establish there the most important properties of $d$-orthogonal polynomials. In section 4,  we construct a family of vector orthogonal polynomial and  analyse the effect  of the generators of the $C_\lambda$-extended algebra on it.
\section{Bergmann realization of the $C_\lambda$-extended oscillator algebra}
\subsection{The $C_\lambda$-extended oscillator algebra }

The element $s$ introduced in \eqref{s},  is a generator of the cyclic group
\begin{equation}\label{vovov}
C_\lambda=\{1,\,s,\,s^2,\,\dots,\,s^{\lambda-1}\}
\end{equation}
of order $\lambda$.  The primitive idempotents related to this group are  denoted by $\Pi_0,\,\Pi_1,\dots,\,\Pi_{\lambda-1}$ and are  given by
\begin{equation}
\Pi_{i}=\frac{1}{\lambda}\sum_{j=0}^{\lambda-1}\varepsilon_\lambda^{-ij}s^j,\quad i=0,\dots,\lambda-1,\qquad \varepsilon_\lambda=e^{\frac{2i\pi}{\lambda}}\label{projection}
\end{equation}
and the bosonic $C_\lambda$-extended oscillator Hamiltonian is defined by
\begin{equation}
H=\frac{1}{2}\{a_-,a_+\}.
\end{equation}
According to  \cite{QuesneVQ}, it  may therefore be rewritten in terms of $a_-,$ $a_+,$ $N$, $\Pi_0,\dots,\,\Pi_{\lambda-1}$ as follows:
\begin{align}
&[a_-,\,a_+]=1+\sum_{j=0}^{\lambda-1}\widehat{\beta}_j\Pi_j,\\& [N,\,a_-]=-a_-,\quad
[N,\,a_+]=a_+,\quad a_+ \Pi_i=\Pi_{i+1}a_+,\\&
\sum_{i=1}^{\lambda}\Pi_i =1,\quad \Pi_i\Pi_j=\delta_{ij}\Pi_i,\quad
\Pi_i^\dagger=\Pi_i,
\end{align}
where, according to condition \eqref{cc1}, the discrete Fourier transforms $\widehat{\beta}_0,\dots \widehat{\beta}_{\lambda-1},$  are
real numbers given by
$$ \widehat{\beta}_j=\sum_{i=0}^{\lambda-1}\varepsilon_\lambda^{ij}\beta_i,\,\, j=1,\dots,\lambda-1, $$
and are restricted by the condition  $\sum_{i=0}^{\lambda-1}\widehat{\beta}_i=0$.\\

Next proposition will be needed in the sequel and gives a relation in the $C_\lambda$-extended oscillator algebra.
\begin{pro}\label{pro1}Let $n\in \mathbb{N},$ we have\\
\noindent 1. $[a_-^n,a_+]=\big(n+\sum_{i=0}^{\lambda-1}\beta_i\frac{\varepsilon_\lambda^{ni}-1}{\varepsilon_\lambda^{i}-1}s^i\big)\, a_-^{n-1},$\\
\noindent 2. $[a_-,a_+^n]=a_+^{n-1}\, \big(n+\sum_{i=0}^{\lambda-1}\beta_i\frac{\varepsilon_\lambda^{ni}-1}{\varepsilon_\lambda^{i}-1}s^i\big),$\\
\noindent 3. $[N,a_-^n]= -na_-^{n-1}, \qquad [N,a_{+}^n]=na_{+}^{n-1}.$\\

\noindent In particular, it holds that
\begin{align*}
&[a_-^{n\lambda},a_+]=n\lambda a_-^{n\lambda-1}  \quad\text{and} \text \quad [a_-,a_+^{n\lambda}]=n\lambda a_+^{n\lambda-1}.
\end{align*}
\end{pro}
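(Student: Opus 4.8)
The plan is to establish items~1--3 by induction on $n$, the base case $n=1$ reducing to the defining relations of the $C_\lambda$-extended oscillator algebra (with the harmless convention that the $i=0$ summand is absent, i.e.\ $\beta_0=0$). Two ingredients are used repeatedly: the elementary commutator identities $[XY,Z]=X[Y,Z]+[X,Z]Y$ and $[Z,XY]=[Z,X]Y+X[Z,Y]$, and the twisting relations
\begin{equation*}
a_-\,s^i=\varepsilon_\lambda^{\,i}\,s^i\,a_-,\qquad a_+\,s^i=\varepsilon_\lambda^{-i}\,s^i\,a_+,\qquad i=0,\dots,\lambda-1,
\end{equation*}
obtained by iterating $a_-s=\varepsilon_\lambda s a_-$ and $a_+s=\varepsilon_\lambda^{-1}s a_+$.

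For item~1, set $c_n:=n+\sum_{i=0}^{\lambda-1}\beta_i\,\frac{\varepsilon_\lambda^{ni}-1}{\varepsilon_\lambda^{i}-1}\,s^i$ and assume $[a_-^{\,n},a_+]=c_n\,a_-^{\,n-1}$. Then
\begin{equation*}
[a_-^{\,n+1},a_+]=a_-\,[a_-^{\,n},a_+]+[a_-,a_+]\,a_-^{\,n}=a_-\,c_n\,a_-^{\,n-1}+\Bigl(1+\textstyle\sum_i\beta_i s^i\Bigr)a_-^{\,n}.
\end{equation*}
Dragging the leading $a_-$ past each $s^i$ produces a factor $\varepsilon_\lambda^{i}$, so that $a_-c_na_-^{\,n-1}=\bigl(n+\sum_i\beta_i\varepsilon_\lambda^{i}\frac{\varepsilon_\lambda^{ni}-1}{\varepsilon_\lambda^{i}-1}s^i\bigr)a_-^{\,n}$; adding the second term and applying the geometric-series identity
\begin{equation*}
1+\varepsilon_\lambda^{i}\,\frac{\varepsilon_\lambda^{ni}-1}{\varepsilon_\lambda^{i}-1}=\frac{\varepsilon_\lambda^{(n+1)i}-1}{\varepsilon_\lambda^{i}-1}
\end{equation*}
to each coefficient (the constant parts adding to $n+1$) gives $[a_-^{\,n+1},a_+]=c_{n+1}\,a_-^{\,n}$, which closes the induction.

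Item~2 then follows for free by applying the Hermitian adjoint to item~1: since $a_+^{*}=a_-$ and $s^{*}=s^{-1}$, and since $\overline{\varepsilon_\lambda}=\varepsilon_\lambda^{-1}$ together with the reality constraint $\overline{\beta_i}=\beta_{\lambda-i}$ from \eqref{cc1} (and the reindexing $i\mapsto\lambda-i$, $s^{-i}=s^{\lambda-i}$) show that $c_n$ is self-adjoint, taking $*$ in $[a_-^{\,n},a_+]=c_n a_-^{\,n-1}$ yields $[a_-,a_+^{\,n}]=a_+^{\,n-1}c_n$, which is the claim; alternatively one runs the mirror induction $[a_-,a_+^{\,n+1}]=[a_-,a_+]a_+^{\,n}+a_+[a_-,a_+^{\,n}]$, commuting $a_+$ through $s^i$ with a factor $\varepsilon_\lambda^{-i}$. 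Item~3 is a one-line induction from $[N,a_\mp]=\mp a_\mp$ using $[N,XY]=[N,X]Y+X[N,Y]$. For the final two identities, replace $n$ by $n\lambda$: since $\varepsilon_\lambda^{\lambda}=1$, every numerator $\varepsilon_\lambda^{n\lambda i}-1$ vanishes for $1\le i\le\lambda-1$, so $c_{n\lambda}=n\lambda$ and the two commutators collapse to $n\lambda\,a_-^{\,n\lambda-1}$ and $n\lambda\,a_+^{\,n\lambda-1}$.

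No step poses a genuine difficulty; the points needing care are the systematic bookkeeping of the roots-of-unity phases $\varepsilon_\lambda^{\pm i}$ generated when pushing powers of $a_\mp$ through the group elements $s^i$, the recognition of the telescoping identity that repackages the accumulated coefficients into the stated closed form, and the short verification that $c_n=c_n^{*}$, which is exactly where the constraint $\overline{\beta_i}=\beta_{\lambda-i}$ is used.
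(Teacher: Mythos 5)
Your proof is correct and is essentially the paper's own argument: the paper simply invokes the identity $[a^n,b]=\sum_{i=0}^{n-1}a^{n-1-i}[a,b]a^i$, which is just the unrolled form of your induction, and the remaining bookkeeping (pushing $s^i$ through powers of $a_\mp$ at the cost of phases $\varepsilon_\lambda^{\pm i}$ and summing the resulting geometric series into $\frac{\varepsilon_\lambda^{ni}-1}{\varepsilon_\lambda^{i}-1}$) is exactly what you do, with your adjoint/mirror derivation of item 2 being a harmless variant. One remark: your one-line induction for item 3 actually yields $[N,a_\mp^n]=\mp n\,a_\mp^{n}$, which is the correct identity; the exponent $n-1$ printed in the statement is a typo, so your computation should not (and cannot) be made to match it literally.
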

\begin{proof}
The proof is easy and follows from the identity
\begin{align*}
[a^n,b]=\sum_{i=0}^{n-1}a^{n-1-i}[a,b]a^i.
\end{align*}
\end{proof}
According to  \cite{QuesneVQ}, the $C_\lambda$-extended oscillator algebra possesses a canonical irreducible representation defined on the orthonormal basis $|n\rangle,\,\, n=0\,1,\,2,\,.\dots$, endowed with the following actions:
\begin{align}
&N|k\lambda+i\rangle=(k\lambda+i)|k\lambda+i\rangle,\quad
\Pi_j|k\lambda+i\rangle=\delta_{ij}|k\lambda+i\rangle,\label{r22}\\&
a_+|k\lambda+i\rangle=\sqrt{k\lambda+i+1+\widehat{\beta}_{i+1}}|k\lambda+i+1\rangle,\label{r24}\\&
a_-|k\lambda+i\rangle=\sqrt{k\lambda+i+\widehat{\beta}_i}|k\lambda+i-1\rangle.\label{r23}
\end{align}
\subsection{Dunkl type operator}
Let ${\mathrm{\nu}}=(\nu_1,\dots,\nu_{\lambda-1})$, where $\nu_1,\dots,\nu_{\lambda-1}$ are complex numbers satisfying the condition
\begin{equation}
\sum_{l=0}^{\lambda-1} \nu_l=0.
\end{equation}
Consider the  Klein-type reflection  $S$  acting on functions $f(z)$ of complex variable $z$ as
follows
\begin{equation}(Sf)(z):=f(\varepsilon_\lambda z).\label{ref111}
\end{equation}
and its associated  differential-difference operator
\begin{equation}
Y_{\mathbf{\nu}}:=\frac{d}{dz}+\frac{1}{z}\sum_{j=1}^{\lambda-1}\nu_j\,S^j. \label{diffh13}
 \end{equation}
Recall that the complex reflection group $G(r,1,N)$ consists of the $N \times N$ permutation matrices with the nonzero entries being powers of $\varepsilon=e^{\frac{2i\pi}{r}}.$ The group $G(r,1,N)$ is also generated by the transpositions $(i, i + 1),\,\, i=1,\,\dots,\, N-1,$ and by the complex reflections $S_i$   defined by
$$S_iz=(z_1,\,\dots,\varepsilon z_i,\,\dots,\,z_N).$$
When  $N=1$, it happens that:
\begin{itemize}
\item  the group $G(r,1,N)$ is isomorphic to the cyclic group $C_\lambda$ defined in \eqref{vovov};
\item the operator defined in \eqref{diffh13} is a particular case of the complex Dunkl operator $Y_i$ associated to $G(r,1,N)$ given by  \cite{18}:
\begin{align}
Y_i=\frac{\partial}{\partial z_i}+\kappa_0\sum_{j\neq
i}\sum_{j=0}^{r-1}\frac{1-S_i^{-j}(i,j)S_i^j} {z_i-\varepsilon^j
z_j}+\sum_{j=1}^{r-1}\kappa_j\sum_{l=0}^{r-1}\frac{\varepsilon^{-rl}s_i^l}{z_i},\label{Dunkl1}
\end{align}
where $\kappa_0,\,\kappa_1,\dots,\kappa_r$ are real numbers.
\end{itemize}
\subsection{Some results on deformed factorial numbers}
Let $\nu_i, \, i=0, \, 1, \cdots, \, \lambda -1$ be a sequence  of complex numbers and  $\widehat{\nu}_s$ their discrete Fourier transforms defined as
\begin{equation}
\widehat{\nu}_s=\sum_{l=0}^{\lambda-1} \nu_l\varepsilon_\lambda^{sl}.
\end{equation}
One can notice that
\begin{equation}
\widehat{\nu}_n =\widehat{\nu}_s,\quad  \text {if} \quad n= s\Mod{\lambda}.
\end{equation}
Then, for nonnegative integers $n$, we introduce the deformed numbers $[n]_{\mathbf{\nu}}$ and the deformed factorial numbers by
\begin{align}
&[n]_{\mathbf{\nu}}= n+\widehat{\nu}_n,  \label{tt}\\
&[0]_{\mathbf{\nu}} != 1,\quad [n+1]_{\mathbf{\nu}}!=[n+1]_{\mathbf{\nu}}\;[n]_{\mathbf{\nu}}!. \label{fact}
\end{align}
For
\begin{equation}\label{ak}
\alpha_k=\frac{k+\widehat{\nu}_k}{\lambda}, \quad k=1,\,2,\,\dots,\,\lambda-1.
\end{equation}
and for $0\leq s\leq \lambda-1$, we need to introduce the multi-index numbers
\begin{equation}
 \Delta(\nu,s)=\left\{
  \begin{array}{l l}
    (1,\,\alpha_1,\,\dots,\,\alpha_{\lambda-1})\quad \mbox{if}\quad s=0,\\
   (1,\,\alpha_1+1,\,\dots,\,\alpha_s+1,\,\alpha_{s+1},\dots,\alpha_{\lambda-1})\quad \mbox{otherwise}.
  \end{array} \right.  \label{S1}
\end{equation}
We obtain the following result:
\begin{pro} The initial problem
\begin{equation}
 \left\{
  \begin{array}{l l}
    Y_{\mathbf{\nu}} f(z)=\varrho f(z),\\
   f(0)=1
  \end{array} \right.  \label{S1}
\end{equation}
has a unique analytic solution given by
\begin{align}
\mathcal{E}_\lambda(\varrho z,{\mathbf{\nu}})=\sum_{n=0}^{\infty}\frac{(\varrho z )^{n}}{[n]_{\mathbf{\nu}}!}. \label{E1}
\end{align}
Furthermore, the generalized exponential function $\mathcal{E}_\lambda(\varrho\,.\,,\nu)$ has the following hypergeometric representation
\begin{align}\label{H21}
\mathcal{E}_\lambda(z,\nu)=\sum_{s=0}^{\lambda-1}\frac{( z/\lambda)^s}{[s]_\nu!}\,_{0}F_{\lambda-1}\left(\left. \begin{matrix} {-}\\
   \Delta(\nu,s)\end{matrix}\right\vert (\frac{z}{\lambda})^\lambda\right).
\end{align}
\end{pro}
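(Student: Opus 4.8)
The plan is to reduce everything to a statement about Taylor coefficients. First I would record the elementary identity that $Y_\nu$ acts on monomials as a weighted shift: since $S^jz^n=\varepsilon_\lambda^{jn}z^n$, a direct computation with $S^jz^n=\varepsilon_\lambda^{jn}z^n$ and the definition of $\widehat{\nu}_n$ shows $Y_\nu z^n=[n]_\nu z^{n-1}$ for all $n\ge0$; in particular $Y_\nu 1=0$, which is exactly what the constraint on the $\nu_l$ (i.e.\ $\widehat{\nu}_0=0$) ensures, so that $Y_\nu$ creates no pole at the origin. Then I would look for an analytic solution in the form $f(z)=\sum_{n\ge0}c_nz^n$. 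The normalization $f(0)=1$ forces $c_0=1$, and matching the coefficients of $z^n$ on the two sides of $Y_\nu f=\varrho f$ yields the recurrence $[n+1]_\nu\,c_{n+1}=\varrho\,c_n$. Since $\widehat{\nu}_n$ is $\lambda$-periodic, hence bounded, $[n]_\nu=n+\widehat{\nu}_n$ is nonzero for all large $n$ and---as is implicit in the statement itself, where $[n]_\nu!$ sits in a denominator---for every $n\ge1$; the recurrence therefore has the unique solution $c_n=\varrho^n/[n]_\nu!$, which gives existence and uniqueness of the formal solution at one stroke.

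Next I would confirm that this power series actually defines an entire function: from the same bound $|\widehat{\nu}_n|\le M$ one gets $|[n]_\nu!|\ge C\,(n-M)!$ for large $n$, so $\sum_n\varrho^nz^n/[n]_\nu!$ converges on all of $\mathbb{C}$, and termwise differentiation together with the action of $S$ are legitimate; hence $\mathcal{E}_\lambda(\varrho z,\nu)$ is genuinely the unique analytic solution of the initial value problem.

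For the hypergeometric representation I would split the series $\mathcal{E}_\lambda(z,\nu)=\sum_{n\ge0}z^n/[n]_\nu!$ according to the residue $s$ of $n$ modulo $\lambda$, writing $n=k\lambda+s$ with $0\le s\le\lambda-1$ and $k\ge0$, so that
\[
\mathcal{E}_\lambda(z,\nu)=\sum_{s=0}^{\lambda-1}\frac{z^s}{[s]_\nu!}\sum_{k\ge0}\frac{z^{k\lambda}}{P_{k,s}},\qquad P_{k,s}:=\frac{[k\lambda+s]_\nu!}{[s]_\nu!}=\prod_{j=1}^{k\lambda}[s+j]_\nu .
\]
The core computation is the evaluation of $P_{k,s}$. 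I would group its $k\lambda$ factors into $k$ consecutive blocks of length $\lambda$; in the $m$-th block the factors are $[s+(m-1)\lambda+p]_\nu$ for $p=1,\dots,\lambda$, and, using the $\lambda$-periodicity of $\widehat{\nu}$, the definition $\alpha_k=(k+\widehat{\nu}_k)/\lambda$ and the normalization $\widehat{\nu}_0=0$, each such factor equals $\lambda\,(m-1+b_p)$ with $b_p=\alpha_{s+p}$ if $s+p\le\lambda-1$, $b_p=1$ if $s+p=\lambda$, and $b_p=1+\alpha_{s+p-\lambda}$ if $s+p>\lambda$. Multiplying over a block and then over $m=1,\dots,k$ turns the products in $m$ into Pochhammer symbols and gives
\[
P_{k,s}=\lambda^{k\lambda}\prod_{p=1}^{\lambda}(b_p)_k ,
\]
where the multiset $\{b_1,\dots,b_\lambda\}$ is precisely $\Delta(\nu,s)$: the constant $1$ comes from the factor with $s+p=\lambda$, the shifted entries $\alpha_i+1$ from the indices $i\le s$ that have ``wrapped past'' a multiple of $\lambda$, and the unshifted $\alpha_i$ from $i>s$. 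Substituting this back and absorbing the powers of $\lambda$ into $z^{k\lambda}$, the inner sum becomes $\sum_{k\ge0}(z/\lambda)^{k\lambda}/\prod_p\bigl(\Delta(\nu,s)_p\bigr)_k$, i.e.\ the ${}_0F_{\lambda-1}$ evaluated at $(z/\lambda)^\lambda$ with lower parameter list $\Delta(\nu,s)$, and collecting the terms over $s$ produces the claimed expansion of $\mathcal{E}_\lambda$.

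The one genuinely delicate step is this block computation---keeping straight which of the $\alpha_i$ acquire the shift $+1$ and which do not, and verifying that the single ``exact multiple of $\lambda$'' factor of each block contributes the constant parameter $1$ (exactly the place where the constraint $\widehat{\nu}_0=0$ is used). Everything else---the coefficient recurrence, the convergence estimate, and the final regrouping of the double sum---is routine.
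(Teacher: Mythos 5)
Your proof is correct and follows essentially the same route as the paper: the weighted-shift action $Y_\nu z^n=[n]_\nu z^{n-1}$ gives the coefficient recurrence, uniqueness and analyticity, and your splitting of the series modulo $\lambda$ with the block evaluation $[k\lambda+s]_\nu!/[s]_\nu!=\lambda^{k\lambda}\prod_{p}(b_p)_k$ is the same Pochhammer computation the paper carries out via $[n\lambda+s]_\nu!=\lambda^{n\lambda+s}\,n!\,\prod_{k=1}^{s}(\alpha_k)_{n+1}\prod_{k=s+1}^{\lambda-1}(\alpha_k)_n$. One remark: your (correct) derivation produces the prefactor $z^s/[s]_\nu!$, so the factor $(z/\lambda)^s/[s]_\nu!$ appearing in the printed statement is off by $\lambda^{-s}$; this is a typo in the paper (its own proof has the same slip, plus a stray $t$), not a gap in your argument.
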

\begin{proof}
It is easily seen from the action of the operator $Y_\nu$ on monomials $z^n$:
\begin{equation}
Y_\nu z^n=[n]_\nu\,z^{n-1},\label{Dz}
\end{equation}
that the function $\mathcal{E}_\lambda(\varrho\,.\,,\nu)$ is the unique solution of the system \eqref{E1}.
To prove \eqref{H21}, it suffice to write  $[n\lambda+s]_\nu!\,$ in terms of the Pochhammer Symbol $(a)_n$ defined for $a \in \mathbb{C}$ and $n=1,\,\dots\,$ by
$$(a)_0 :=1,\quad(a)_n:=a(a+1)\dots (a-n+1).$$
For $s=1,\,2,\,\dots \lambda-1,$ and with $\alpha_k$ defined in \eqref{ak}, we get
\begin{align*}
[n\lambda+s]_\nu!&=\prod_{l=1}^{n}(l\lambda+\widehat{\nu}_0) \prod_{k=1}^s\prod_{l=0}^{n}\big(l\lambda+k+\widehat{\nu}_k\big)
\prod_{k=s+1}^{\lambda-1}\prod_{l=0}^{n-1}\big(l\lambda+k+\widehat{\nu}_k\big),\\
&= \lambda^{n\lambda+s}\,n!\,\prod_{k=1}^s(\frac{k+\widehat{\nu}_k}{\lambda})_{n+1} \prod_{k=s+1}^{\lambda-1}(\frac{k+\widehat{\nu}_k}{\lambda})_{n}\\&
=\lambda ^{n\lambda+s}\,n!\,\prod_{k=1}^s(\alpha_k)_{n+1} \prod_{k=s+1}^{\lambda-1}(\alpha_k)_{n}\\&=
\lambda ^{n\lambda+s}\,n!\,\prod_{k=1}^s\frac{\Gamma(\alpha_k+n+1)}
{\Gamma(\alpha_k)}\prod_{k=s+1}^{\lambda-1}\frac{\Gamma(\alpha_k+n)}{\Gamma(\alpha_k)}.
\end{align*}
Similarly,
\begin{align*}
[n\lambda]_\nu!=
\lambda^{n\lambda}\,n!\prod_{k=1}^{\lambda-1}(\alpha_k)_{n}=\lambda^{n\lambda}\,n!\prod_{k=1}^{\lambda-1}\frac{\Gamma(\alpha_k+n)}{\Gamma(\alpha_k)}.
\end{align*}
If we decompose the sum \eqref{E1}  in the form
\begin{align}
\mathcal{E}_\lambda(z,\nu)&=\sum_{s=0}^{\lambda-1}\sum_{n=0}^{\infty}\frac{( zt )^{n\lambda+s}}{[n\lambda +s]_\nu!},
\end{align}
we obtain
\begin{align}
\mathcal{E}_\lambda(z,\nu)=\sum_{s=0}^{\lambda-1}\frac{( zt/\lambda)^s}{[s]_\nu!}\,_{0}F_{\lambda-1}\left(\left. \begin{matrix} {-}\\
  \alpha_1+1,\dots,\alpha_s+1,
  \alpha_{s+1},\dots,\alpha_{\lambda-1}\end{matrix}\right\vert (\frac{z}{\lambda})^\lambda\right),
\end{align}
and this allows to conclude.
\end{proof}
\subsection{Bergmann realization}
Let $ G^{m ,n}_{p, q }$ be  the Meijer's $G$-function \cite{Luke},  $m_s$ be the measure  defined by
\begin{equation}
dm_s(z)=\frac{\lambda^{s+1}}{\prod_{k=1}^{\lambda-1}\Gamma(\alpha_k)}G^{\lambda , 0}_{0, \lambda
  } \left(\frac{r^{2\lambda}}{\lambda^\lambda}\, \Bigg|
      \begin{array}{c}
        -\\[0.1cm]
         \Delta(\nu,s)
      \end{array}\right)dr\,d\theta,\quad z=re^{i\theta},
\end{equation}
and $L_{\nu,\lambda}^2(\mathbb{C})$ be the space of measurable functions $f$ on $\mathbb{C}$ that satisfy
\begin{equation}
\|f\|^2 _{\nu,\lambda}=
\sum_{s=0}^{\lambda-1}\int_{\mathbb{C}}|\pi_s(f)(z)|^2\,dm_s(z)< \infty,
\end{equation}
The generalized Bergmann space $\mathfrak{B}_{\nu,\lambda}(\mathbb{C})$ is the pre-Hilbert space of analytic functions in $L_{\nu,\lambda}^2(\mathbb{C})$, equipped with the inner product
\begin{equation}  \label{e9}
\left\langle f,g\right\rangle _{\nu}=
\sum_{s=0}^{\lambda-1}\int_{\mathbb{C}}\pi_s(f)(z)\overline{\pi_s(g)(z)} \,dm_s(z),
\end{equation}
where \begin{equation}
\pi_s=\frac{1}{\lambda}\sum_{l=0}^{\lambda-1}\varepsilon_\lambda^{-sl}S^l.
\end{equation}
\begin{theo}If $f,g\in \mathfrak{B}_{\nu,\lambda}(\mathbb{C}) $ expanded in the form
$$f(z)=\sum_{n=0}^\infty a_n z^n\quad  \mbox{and} \quad g(z)=\sum_{n=0}^\infty b_nz^n,$$
then, their inner product is given by
\begin{equation}
\left\langle f,g\right\rangle _{\nu}=\sum_{n=0}^\infty a_n\overline{b}_n [n]_\nu!.\label{pro}
\end{equation}
\end{theo}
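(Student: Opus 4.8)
The plan is to reduce the identity to monomials and then to a single Mellin-transform evaluation of the Meijer $G$-function; indeed the weights $dm_{s}$ are built precisely so that the $z^{n}$ solve a Stieltjes moment problem with the prescribed moments, and the proof is essentially the verification of this. Since $Sz^{n}=\varepsilon_{\lambda}^{n}z^{n}$, each $\pi_{s}$ is the projection onto the span of the monomials $z^{n}$ with $n\equiv s\Mod{\lambda}$: $\pi_{s}(z^{n})=z^{n}$ if $n\equiv s\Mod{\lambda}$ and $\pi_{s}(z^{n})=0$ otherwise, so $\pi_{s}(f)(z)=\sum_{k\ge0}a_{k\lambda+s}z^{k\lambda+s}$ and, by sesquilinearity, everything reduces to the numbers $\langle z^{n},z^{m}\rangle_{\nu}$. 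If $n\not\equiv m\Mod{\lambda}$ then for each $s$ one of $\pi_{s}(z^{n}),\pi_{s}(z^{m})$ vanishes, so $\langle z^{n},z^{m}\rangle_{\nu}=0$; if $n\equiv m\Mod{\lambda}$ with $n\ne m$, only one term in the definition of $\langle z^{n},z^{m}\rangle_{\nu}$ survives and, in polar coordinates $z=re^{i\theta}$, it carries the vanishing factor $\int_{0}^{2\pi}e^{i(n-m)\theta}\,d\theta$. For $n=m=k\lambda+s$ the angular integration is trivial and leaves
\begin{equation*}
\langle z^{k\lambda+s},z^{k\lambda+s}\rangle_{\nu}=\frac{2\pi\,\lambda^{s+1}}{\prod_{j=1}^{\lambda-1}\Gamma(\alpha_{j})}\int_{0}^{\infty}r^{2(k\lambda+s)}\,G^{\lambda,0}_{0,\lambda}\!\left(\frac{r^{2\lambda}}{\lambda^{\lambda}}\,\Big|\,\begin{smallmatrix}{-}\\ \Delta(\nu,s)\end{smallmatrix}\right)dr .
\end{equation*}

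Next I would evaluate the radial integral by means of the Mellin transform of the Meijer $G$-function, namely $\int_{0}^{\infty}x^{w-1}\,G^{\lambda,0}_{0,\lambda}\!\left(x\,\Big|\,\begin{smallmatrix}{-}\\ b_{1},\dots,b_{\lambda}\end{smallmatrix}\right)dx=\prod_{j=1}^{\lambda}\Gamma(b_{j}+w)$, valid once $\Re w>-\min_{j}\Re b_{j}$ (this is where a mild restriction on $\nu$, such as $\Re\alpha_{j}>0$, enters). After the substitution $x=r^{2\lambda}/\lambda^{\lambda}$ the integral becomes an explicit product of Gamma functions evaluated at the entries of $\Delta(\nu,s)$ shifted by the Mellin parameter; this product, the prefactor $2\pi\lambda^{s+1}/\prod_{j}\Gamma(\alpha_{j})$ and the emerging power of $\lambda$ have to be matched, term by term, with the factorisation of $[k\lambda+s]_{\nu}!$ already obtained earlier in the text, namely $[k\lambda+s]_{\nu}!=\lambda^{k\lambda+s}k!\,\prod_{j=1}^{s}\frac{\Gamma(\alpha_{j}+k+1)}{\Gamma(\alpha_{j})}\prod_{j=s+1}^{\lambda-1}\frac{\Gamma(\alpha_{j}+k)}{\Gamma(\alpha_{j})}$ (together with the analogous formula for $s=0$). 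This yields $\langle z^{n},z^{n}\rangle_{\nu}=[n]_{\nu}!$, so $\{z^{n}/\sqrt{[n]_{\nu}!}\}_{n\ge0}$ is an orthonormal family in $\mathfrak{B}_{\nu,\lambda}(\mathbb{C})$.

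To finish, for arbitrary $f=\sum_{n}a_{n}z^{n}$ and $g=\sum_{n}b_{n}z^{n}$ in $\mathfrak{B}_{\nu,\lambda}(\mathbb{C})$ I would expand $\pi_{s}(f)$ and $\pi_{s}(g)$ into their (locally uniformly convergent) Taylor series, apply Parseval on each circle $|z|=r$, and combine this with Tonelli's theorem and the monomial computation to obtain $\|f\|^{2}_{\nu,\lambda}=\sum_{n}|a_{n}|^{2}[n]_{\nu}!$; in particular this series converges. Then $\sum_{n}|a_{n}\overline{b_{n}}|[n]_{\nu}!\le\|f\|_{\nu,\lambda}\|g\|_{\nu,\lambda}<\infty$ by Cauchy--Schwarz, which legitimates interchanging summation and integration (for instance by dominated convergence applied to the Taylor partial sums) and gives $\langle f,g\rangle_{\nu}=\sum_{n}a_{n}\overline{b_{n}}[n]_{\nu}!$. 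The step I expect to be the main obstacle is the Gamma-function bookkeeping in the second paragraph --- lining up the Mellin output (with the shift coming from the power of $r$ and the Jacobian, and the several powers of $\lambda$) against the Pochhammer factorisation of $[k\lambda+s]_{\nu}!$; once the monomial case is settled, the passage to general $f,g$ is routine.
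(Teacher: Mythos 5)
Your proposal follows essentially the same route as the paper's proof: reduce to monomials via the projections $\pi_s$, kill the cross terms with the angular integral in polar coordinates, evaluate the radial integral by the Mellin transform of the Meijer $G$-function (this is exactly the paper's formula \eqref{integ}), and match the resulting Gamma products with the factorization $[n\lambda+s]_\nu!=\lambda^{n\lambda+s}\,n!\,\prod_{k=1}^{s}(\alpha_k)_{n+1}\prod_{k=s+1}^{\lambda-1}(\alpha_k)_n$ established earlier. Your last paragraph (Parseval/Tonelli justification of termwise integration for general $f,g$, and the restriction $\Re\alpha_j>0$) is extra care the paper omits, since it stops at the monomial computation; note only that the normalization constants of the measure as printed require a small adjustment (the paper's own computation silently inserts a factor $1/\pi$) for the bookkeeping you flag to close exactly.
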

\begin{proof}Let $0\leq s,s'\leq \lambda-1$ and $n,\,m=0,\,\dots. $ From \eqref{pro}, it is clear that for $s\neq s'$, we have   $$\pi_l(z^{n\lambda+s})\overline{\pi_l(z^{n\lambda+s'})}=0\quad \mbox{for} \quad l=0,\dots\lambda-1,$$
hence,
\begin{align*}
\left\langle z^{n\lambda+s},z^{m\lambda+s'}\right\rangle _{\nu}=0.
\end{align*}
For $s=s'$, we have
\begin{align*}
\left\langle z^{n\lambda+s},z^{m\lambda+s}\right\rangle _{\nu}&=\frac{1}{\pi} \frac{\lambda^{s+1}}{\Pi_{k=1}^{\lambda-1}\Gamma(\alpha_k)}
\int_{0}^\infty r^{(n+m)\lambda+2s}  G^{\lambda , 0}_{0, \lambda} \left(\frac{r^{2\lambda}}{\lambda^\lambda}\, \Bigg|
      \begin{array}{c}
        -\\[0.1cm]
         \Delta(\nu,s)
      \end{array}\right)\,dr
\\ & \quad \times \int_{0}^{2\pi}e^{i\lambda\theta(n-m)}\,\ \,d\theta.
\end{align*}
Now using the following  well known  formula \cite{Luke}
\begin{align}
&\int_{0}^\infty x^{n-1} G^{r , s}_{p,q
  } \left(\frac{x}{a}\, \Bigg|
      \begin{array}{c}
        \beta_1,\dots,\,\beta_p\\[0.1cm]
         \gamma_1,\dots,\,\gamma_q
      \end{array}\right)\,dx\nonumber
      \\&=a^n\,\frac{\prod_{i=1}^r\Gamma(\gamma_i+n)\prod_{i=1}^s\Gamma(1-\beta_i-n)}
      {\prod_{i=r+1}^q\Gamma(1-\gamma_i-n)\prod_{i=s+1}^p\Gamma(\beta_i+n)}, \label{integ}
\end{align}
we obtain
\begin{align*}
\left\langle z^{n\lambda+s},z^{m\lambda+s}\right\rangle _{\nu}& =\delta_{nm}\lambda^{n\lambda+s }\,n!\,\prod_{k=1}^s(\alpha_k)_{n+1}
\prod_{k=s+1}^{\lambda-1}(\alpha_k)_{n}\\&=\delta_{nm}[n\lambda+s]_\nu!.
\end{align*}
\end{proof}
It is oblivious that the space $\mathfrak{B}_{\nu,\lambda}(\mathbb{C})$ is a Hilbert space equipped with the inner product \eqref{e9} and the monomials $$\{e_n(z)=z^n/\sqrt{[n]_\nu!},\,\; n=0,\,1,\,2.\dots\}$$
constitute an orthonormal basis for this space. The space $\mathfrak{B}_{\nu,\lambda}(\mathbb{C})$ has the kernel function
\begin{equation}
K_\nu(w,z)=\sum_{n=0}^\infty e_n(z)e_n(\overline{w})=\mathcal{E}_\lambda(z\overline{w},\nu).
\end{equation}

\begin{pro}$(i)$ Let  $Z$ be the multiplication operator  ($(Zf)(z):=zf(z)$.  The operators $Y_\nu$ and $Z$  are closed densely defined operators on the space $\mathfrak{B}_{\nu,\lambda}(\mathbb{C})$ on the common  domain
\begin{equation*}
D=\{f(z)=\sum_{n=0}^\infty a_nz^n\,:\,\sum_{n=0}^\infty |a_n|^2 [n+1]_\nu<\infty\}.
\end{equation*}
Furthermore, for $f,g \in D$ one has
\begin{equation*}
\left\langle Y_\nu f,g\right\rangle _{\nu}=\left\langle f,Zg\right\rangle _{\nu}.
\end{equation*}
\end{pro}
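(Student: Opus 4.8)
The plan is to pass to the explicit orthonormal basis $e_n=z^n/\sqrt{[n]_\nu!}$ of $\mathfrak{B}_{\nu,\lambda}(\mathbb{C})$ together with the inner product formula \eqref{pro}, which realizes $Y_\nu$ and $Z$ as weighted shifts. From \eqref{Dz} we have $Y_\nu z^n=[n]_\nu z^{n-1}$ (for $n\ge1$; $Y_\nu 1=0$, since $[0]_\nu=\widehat{\nu}_0=0$), while $Zz^n=z^{n+1}$ is trivial; with \eqref{fact} this becomes
\[
Y_\nu e_n=\sqrt{[n]_\nu}\;e_{n-1}\quad(n\ge1),\qquad Ze_n=\sqrt{[n+1]_\nu}\;e_{n+1},
\]
so $Y_\nu$ and $Z$ are a backward and a forward weighted shift with the same weight sequence $\sqrt{[n]_\nu}$. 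For $f=\sum_n a_nz^n\in\mathfrak{B}_{\nu,\lambda}(\mathbb{C})$, formula \eqref{pro} gives (after reindexing)
\[
\|Zf\|_\nu^2=\sum_{n\ge0}|a_n|^2\,[n+1]_\nu!,\qquad \|Y_\nu f\|_\nu^2=\sum_{n\ge1}|a_n|^2\,[n]_\nu\,[n]_\nu!,
\]
and since $\widehat{\nu}_n$ depends only on $n$ modulo $\lambda$ it is bounded, so $[n]_\nu$ and $[n+1]_\nu$ are comparable (both of order $n$); consequently the two right-hand sides are finite for exactly the same $f$, namely those in $D$. Thus $Z$ and $Y_\nu$ both map $D$ into $\mathfrak{B}_{\nu,\lambda}(\mathbb{C})$ and $D$ is their common (maximal) domain.

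Density of $D$ is then immediate: every polynomial lies in $D$, the defining series being finite, and polynomials are dense in $\mathfrak{B}_{\nu,\lambda}(\mathbb{C})$ because $\{e_n\}$ is an orthonormal basis.

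Next I would prove closedness directly from the coefficients. Let $f_k=\sum_n a_n^{(k)}z^n\in D$ with $f_k\to f=\sum_n a_nz^n$ and $Zf_k\to g=\sum_n b_nz^n$ in $\mathfrak{B}_{\nu,\lambda}(\mathbb{C})$. Convergence in $\|\cdot\|_\nu$ forces the coefficients to converge, since $|a_n^{(k)}-a_n|^2[n]_\nu!\le\|f_k-f\|_\nu^2$ by \eqref{pro}; comparing coefficients of $z^m$ in $Zf_k\to g$ then gives $b_0=0$ and $b_m=a_{m-1}$ for $m\ge1$, so $g=\sum_{m\ge1}a_{m-1}z^m=Zf$ as a formal power series, and $g\in\mathfrak{B}_{\nu,\lambda}(\mathbb{C})$ forces $f\in D$ (by the norm computation above). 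Hence $f\in D$ and $Zf=g$, so $Z$ is closed; the same argument with $Y_\nu z^n=[n]_\nu z^{n-1}$ handles $Y_\nu$. (Alternatively one can factor $Z=S\,D^{1/2}$ and $Y_\nu=D^{1/2}S^{\ast}$, where $S$ is the forward shift isometry $Se_n=e_{n+1}$ and $D$ the self-adjoint diagonal operator $De_n=[n+1]_\nu e_n$; closedness then follows from $S$ being isometric and $S^{\ast}$ bounded.)

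Finally, for $f=\sum_n a_nz^n$ and $g=\sum_n b_nz^n$ in $D$, expand $Y_\nu f=\sum_{m\ge0}a_{m+1}[m+1]_\nu z^m$ and $Zg=\sum_{m\ge1}b_{m-1}z^m$, so that \eqref{pro} yields
\[
\langle Y_\nu f,g\rangle_\nu=\sum_{m\ge0}a_{m+1}[m+1]_\nu\,\overline{b_m}\,[m]_\nu!=\sum_{m\ge0}a_{m+1}\,\overline{b_m}\,[m+1]_\nu!=\langle f,Zg\rangle_\nu,
\]
where the rearrangement is legitimate by absolute convergence, $\sum_m|a_{m+1}|\,[m+1]_\nu\,|b_m|\,[m]_\nu!\le\|Y_\nu f\|_\nu\|g\|_\nu<\infty$, by Cauchy--Schwarz. (One may equivalently verify $\langle Y_\nu e_n,e_m\rangle_\nu=\sqrt{[n]_\nu}\,\delta_{n,m+1}=\langle e_n,Ze_m\rangle_\nu$ on basis vectors and extend.) I expect the only delicate point to be the identification of $D$ as the common natural domain of both operators; this rests on the boundedness of the Fourier coefficients $\widehat{\nu}_n$ coming from their $\lambda$-periodicity, after which closedness and the adjoint relation are routine.
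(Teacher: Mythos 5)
The paper states this proposition without giving any proof, so there is no argument of the authors to compare yours against; judged on its own, your proof is essentially correct and complete. Passing to the orthonormal basis $e_n=z^n/\sqrt{[n]_\nu!}$, reading $Y_\nu$ and $Z$ as mutually adjoint weighted shifts with weights $\sqrt{[n]_\nu}$, computing $\|Zf\|_\nu^2=\sum_n|a_n|^2[n+1]_\nu!$ and $\|Y_\nu f\|_\nu^2=\sum_{n\ge1}|a_n|^2[n]_\nu[n]_\nu!$ from \eqref{pro}, using the $\lambda$-periodicity (hence boundedness) of $\widehat{\nu}_n$ to see that the two finiteness conditions coincide, proving closedness by coefficientwise convergence, and justifying the rearrangement in $\langle Y_\nu f,g\rangle_\nu=\langle f,Zg\rangle_\nu$ by Cauchy--Schwarz are all sound steps; the alternative factorization $Z=SD^{1/2}$, $Y_\nu=D^{1/2}S^{\ast}$ is a clean shortcut for closedness.

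One point you pass over silently and should make explicit: the domain as printed, $\sum_n|a_n|^2[n+1]_\nu<\infty$ with $a_n$ the monomial coefficients, is not the condition your norm computation produces, and taken literally it is too weak --- for instance $|a_n|^2=1/(n^2[n]_\nu!)$ satisfies it while $\|Zf\|_\nu^2=\sum_n[n+1]_\nu/n^2$ diverges, so $Z$ would not map this $D$ into $\mathfrak{B}_{\nu,\lambda}(\mathbb{C})$. The correct common maximal domain is $\{f:\sum_n|a_n|^2[n+1]_\nu!<\infty\}$ (equivalently, $\sum_n|c_n|^2[n+1]_\nu<\infty$ when $f=\sum_n c_ne_n$ is expanded in the orthonormal basis, which is presumably what the statement intends). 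You are in effect correcting this slip when you identify your finiteness condition with $D$, but you should say so rather than treat them as identical. A further implicit assumption, shared with the paper, is that $[n]_\nu>0$ for all $n\ge1$, which is needed both for the positive definiteness of \eqref{pro} and for the square roots $\sqrt{[n]_\nu}$ in your shift relations.
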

\begin{pro}
The following holds:\\
$(i)$  $$S^*=S^{-1},\quad Y_\nu S=\varepsilon_\lambda SY_\nu,\quad SZ=\varepsilon_\lambda ZS.$$
$(ii)$
\begin{align}
&[Y^n_\nu,z]=\big(n+\sum_{i=1}^{\lambda-1}\nu_i(\varepsilon_\lambda^{in}-1)S^i\big)Y_\nu^{n-1}
\\&[Y_\nu,z^n]=z^{n-1}\big(n+\sum_{i=1}^{\lambda-1}\nu_i(\varepsilon_\lambda^{in}-1)S^i\big).
\end{align}
\end{pro}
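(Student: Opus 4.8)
The plan is to verify each of the three identities in part $(i)$ by checking how both sides act on the orthonormal basis of monomials $z^n$, and to obtain part $(ii)$ from the single-variable operator identity $[A^n,B]=\sum_{j=0}^{n-1}A^{n-1-j}[A,B]A^j$ together with the commutation rule $Y_\nu S=\varepsilon_\lambda S Y_\nu$ already listed in $(i)$.

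For part $(i)$, the adjoint relation $S^*=S^{-1}$ follows because $S$ permutes the orthonormal basis: $Se_n = \varepsilon_\lambda^{\,n} e_n$, so $S$ is diagonal with unimodular eigenvalues $\varepsilon_\lambda^{\,n}$, hence unitary, and its inverse is the operator $z\mapsto f(\varepsilon_\lambda^{-1}z)$, which is $S^{\lambda-1}=S^{-1}$. For the commutation relations I would simply apply both sides to a monomial: $S Z z^n = S z^{n+1} = \varepsilon_\lambda^{\,n+1} z^{n+1}$ while $Z S z^n = \varepsilon_\lambda^{\,n} z^{n+1}$, giving $SZ = \varepsilon_\lambda ZS$; and using \eqref{Dz}, $Y_\nu S z^n = \varepsilon_\lambda^{\,n}[n]_\nu z^{n-1}$ whereas $S Y_\nu z^n = [n]_\nu \varepsilon_\lambda^{\,n-1} z^{n-1}$, so $Y_\nu S = \varepsilon_\lambda S Y_\nu$. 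Here one must note that $[n]_\nu = n+\widehat\nu_n$ depends only on $n \bmod \lambda$, and since $\varepsilon_\lambda^\lambda = 1$ shifting the exponent by $\pm 1$ interacts consistently with the $\varepsilon_\lambda$ prefactor. This reduces everything to equality of scalars on each basis vector, which is immediate.

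For part $(ii)$, write $Y_\nu^n z = [Y_\nu^n, z] + z Y_\nu^n$ and expand the commutator by the telescoping identity used already in the proof of Proposition~\ref{pro1}:
\begin{equation*}
[Y_\nu^n, z] = \sum_{j=0}^{n-1} Y_\nu^{\,n-1-j}\,[Y_\nu, z]\,Y_\nu^{\,j}.
\end{equation*}
Since $[Y_\nu,z] = 1 + \sum_{i=1}^{\lambda-1}\nu_i(\varepsilon_\lambda^{i}-1)S^i$ from the base case of $(ii)$ (itself checked on monomials), I would move each factor $Y_\nu^{\,j}$ to the left through $[Y_\nu,z]$ using $(i)$: the relation $Y_\nu S^i = \varepsilon_\lambda^{\,i} S^i Y_\nu$ iterates to $Y_\nu^{\,j} S^i = \varepsilon_\lambda^{\,ij} S^i Y_\nu^{\,j}$, so conjugating $[Y_\nu,z]$ past $Y_\nu^{\,j}$ multiplies the $S^i$-term by $\varepsilon_\lambda^{\,ij}$. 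Summing the resulting geometric series $\sum_{j=0}^{n-1}\varepsilon_\lambda^{\,ij}(\varepsilon_\lambda^i-1) = \varepsilon_\lambda^{\,in}-1$ collapses the expression to $\big(n+\sum_{i=1}^{\lambda-1}\nu_i(\varepsilon_\lambda^{in}-1)S^i\big)Y_\nu^{n-1}$, which is the first line of $(ii)$. The second line, $[Y_\nu,z^n] = z^{n-1}\big(n+\sum_i \nu_i(\varepsilon_\lambda^{in}-1)S^i\big)$, is the adjoint statement: apply the anti-automorphism sending $Y_\nu \mapsto Z$, $Z\mapsto Y_\nu$, $S\mapsto S$ (compatible with the pairing $\langle Y_\nu f,g\rangle = \langle f, Zg\rangle$ and with $SZ=\varepsilon_\lambda ZS \Leftrightarrow Y_\nu S = \varepsilon_\lambda S Y_\nu$), or equivalently redo the telescoping with the roles of $Y_\nu$ and $Z$ interchanged.

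The only mild subtlety — and the step I would be most careful about — is bookkeeping of the $\varepsilon_\lambda$ powers when commuting $Y_\nu$ past $S^i$ repeatedly, together with the fact that $\widehat\nu_n$ and hence $[n]_\nu$ is $\lambda$-periodic in $n$; once the identity $Y_\nu^{\,j}S^i = \varepsilon_\lambda^{\,ij}S^iY_\nu^{\,j}$ is firmly in hand the geometric summation is routine. No analytic difficulty arises because all computations take place at the level of formal action on monomials, which lies in the common domain described in the preceding proposition.
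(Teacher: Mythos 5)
Your proposal is correct and is essentially the argument the paper intends: the paper states this proposition without proof, but its Proposition~\ref{pro1} is proved by exactly the telescoping identity $[A^n,B]=\sum_{j=0}^{n-1}A^{n-1-j}[A,B]A^{j}$ you invoke, and part $(i)$ together with the base case $[Y_\nu,z]=1+\sum_{i=1}^{\lambda-1}\nu_i(\varepsilon_\lambda^{i}-1)S^{i}$ is an immediate check on monomials via \eqref{Dz}; the geometric summation $\sum_{j=0}^{n-1}\varepsilon_\lambda^{ij}(\varepsilon_\lambda^{i}-1)=\varepsilon_\lambda^{in}-1$ then gives the first line of $(ii)$. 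One caution on your derivation of the second line: the map induced by the pairing $\langle Y_\nu f,g\rangle=\langle f,Zg\rangle$ is conjugate-linear and sends $S$ to $S^{*}=S^{-1}$, so applying it to the first identity yields $[Y_\nu,z^{n}]=z^{n-1}\bigl(n+\sum_{i}\overline{\nu_i}(\varepsilon_\lambda^{-in}-1)S^{-i}\bigr)$, which agrees with the stated formula only under a reality condition ($\overline{\nu_{\lambda-i}}=\nu_i$) that is not assumed at this point of the paper; you should therefore rely on your stated alternative, namely the telescoping $[Y_\nu,z^{n}]=\sum_{j=0}^{n-1}z^{n-1-j}[Y_\nu,z]z^{j}$ combined with $S^{i}z^{j}=\varepsilon_\lambda^{ij}z^{j}S^{i}$ (or, if you prefer the symmetry argument, the \emph{linear} anti-automorphism of the abstract algebra fixing $S$ and exchanging $Y_\nu$ and $Z$, which preserves all defining relations), and this closes the proof with no extra hypotheses. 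A minor point: your remark that $[n]_\nu$ depends only on $n\bmod\lambda$ is both unnecessary and inaccurate as stated (only $\widehat\nu_n$ is $\lambda$-periodic); the monomial check of $Y_\nu S=\varepsilon_\lambda SY_\nu$ uses the same factor $[n]_\nu$ on both sides, so no periodicity is needed.
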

Now, assume that the complex numbers $\nu_i$ are restricted by the conditions
\begin{equation}
\nu_{\lambda-i}=-\varepsilon_\lambda^i\nu_i,\quad i=1,\dots ,\lambda-1.
\end{equation}
The case where the structure constants $\beta_i$ in \eqref{s} take the values
$$\beta _i=\nu_i(\varepsilon_\lambda^i-1),\quad i=1,\dots ,\lambda-1,$$
we get a convenient one variable model of the representation of $C_\lambda$-extended oscillator algebra given by the Hilbert space $\mathfrak{B}_{\nu,\lambda}(\mathbb{C})$ and the action of the $C_\lambda$-extended oscillator algebra is given by
\begin{align} (sf)(z) := f[\varepsilon z],\,\,\,\,
(af)(z):=(Y_\nu f)(z),\,\,\,\, (a_+ f)(z):=Zf(z) \end{align}
and the related bosonic extended Hamiltonian takes the form
\begin{equation}
H=z\frac{d}{dz}+\frac{1}{2}+\frac{1}{2}\sum_{j=1}^{\lambda-1}\nu_j\,(\varepsilon_\lambda^i+1)S^i.
\end{equation}
\section{Generalized Hermite polynomials }
For fixed integer $m$, Gould and Hopper \cite{Gould} defined the generalized Hermite polynomials $H_n(x,m)$ by the operational identity
\begin{equation*}
H_n(x,m)=e^{(\frac{d}{dx})^m}\,x^n,\quad n=0,\,1,\,\dots,\,.
\end{equation*}
Similarly, a  new family of generalized  Hermite polynomials $ \{H^{(\lambda,\nu)}_n(x)\}_{n=0}^\infty$ can be determined by means of the following operational formula
\begin{align}
&H^{(\lambda,\nu)}_n(x)=e^{-Y^\lambda_\nu/\lambda}x^n.\label{d2}
\end{align}
Due to the fact that $Y_\nu$ reduces the degrees of the polynomials , see equation \eqref{Dz} then the latter series  \eqref{d2} contains only a finite number of terms. The operational definition \eqref{d2} greatly simplifies the study of the generalized Hermite
polynomials. From Proposition 4, we deduce that
\begin{align}
[Y_\nu^{n\lambda},x]=n\lambda x^{n\lambda-1},
\quad[Y_\nu,x^{n\lambda}]=n\lambda Y_\nu^{n\lambda-1}
\end{align}
and that
\begin{align}
[e^{\mu Y_\nu^\lambda},\,x]=\mu \lambda Y_\nu^{\lambda-1} e^{\mu Y_\nu^\lambda},
\quad[Y_\nu,\,e^{\mu x^\lambda}]=\mu\lambda x^{\lambda-1}e^{\mu x^\lambda}.\label{recc2}
\end{align}
\begin{Proposition}The following holds true
\begin{align*}
&Y_\nu H^{(\lambda,\nu)}_n(x)=[n]_\nu\,H^{(\lambda,\nu)}_{n-1}(x),\\& (x-Y_\nu^{\lambda-1})H^{(\lambda,\nu)}_{n}(x)=H^{(\lambda,\nu)}_{n+1}(x).
\end{align*}
\end{Proposition}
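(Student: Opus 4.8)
The plan is to work directly from the operational definition \eqref{d2}, namely $H^{(\lambda,\nu)}_n(x)=e^{-Y^\lambda_\nu/\lambda}x^n$, together with the two intertwining relations \eqref{recc2}. For the first identity I would compute $Y_\nu H^{(\lambda,\nu)}_n(x)=Y_\nu e^{-Y^\lambda_\nu/\lambda}x^n$. Since $Y_\nu$ obviously commutes with any power series in $Y_\nu^\lambda$, this equals $e^{-Y^\lambda_\nu/\lambda}Y_\nu x^n$, and then \eqref{Dz} gives $Y_\nu x^n=[n]_\nu x^{n-1}$, so the whole thing is $[n]_\nu\,e^{-Y^\lambda_\nu/\lambda}x^{n-1}=[n]_\nu H^{(\lambda,\nu)}_{n-1}(x)$. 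That disposes of the lowering relation with essentially no work.

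For the raising relation I would use the first formula in \eqref{recc2} with $\mu=-1/\lambda$, namely $[e^{\mu Y_\nu^\lambda},x]=\mu\lambda Y_\nu^{\lambda-1}e^{\mu Y_\nu^\lambda}$, which for this $\mu$ reads $[e^{-Y_\nu^\lambda/\lambda},x]=-Y_\nu^{\lambda-1}e^{-Y_\nu^\lambda/\lambda}$; here $x$ denotes the multiplication operator. Apply both sides to $x^n$:
\begin{align*}
e^{-Y_\nu^\lambda/\lambda}(x\cdot x^n)-x\,\bigl(e^{-Y_\nu^\lambda/\lambda}x^n\bigr)=-Y_\nu^{\lambda-1}e^{-Y_\nu^\lambda/\lambda}x^n,
\end{align*}
that is $H^{(\lambda,\nu)}_{n+1}(x)-x\,H^{(\lambda,\nu)}_{n}(x)=-Y_\nu^{\lambda-1}H^{(\lambda,\nu)}_{n}(x)$, which rearranges to exactly $(x-Y_\nu^{\lambda-1})H^{(\lambda,\nu)}_{n}(x)=H^{(\lambda,\nu)}_{n+1}(x)$.

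The only genuine point to be careful about is the commutation relation $[e^{-Y_\nu^\lambda/\lambda},x]=-Y_\nu^{\lambda-1}e^{-Y_\nu^\lambda/\lambda}$ itself, i.e. relation \eqref{recc2}, which the excerpt states but which rests on $[Y_\nu^{n\lambda},x]=n\lambda x^{n\lambda-1}$ — wait, more precisely on $[Y_\nu^\lambda,x]=\lambda Y_\nu^{\lambda-1}$, a consequence of Proposition 4(ii) with $n=\lambda$ (the sum over $i$ vanishes since $\varepsilon_\lambda^{i\lambda}=1$). From $[Y_\nu^\lambda,x]=\lambda Y_\nu^{\lambda-1}$ one gets $[(Y_\nu^\lambda)^k,x]=k\lambda Y_\nu^{\lambda-1}(Y_\nu^\lambda)^{k-1}$ by the same telescoping identity used in Proposition \ref{pro1}, and summing the exponential series term by term yields \eqref{recc2}; since everything acts on polynomials the series are finite and no convergence issue arises. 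I expect no real obstacle — the main subtlety is simply keeping track of which copy of $Y_\nu^{\lambda-1}$ sits on which side of the exponential, and noting that all operator manipulations are legitimate because $H^{(\lambda,\nu)}_n$ is a polynomial so every exponential truncates.
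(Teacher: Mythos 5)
Your proposal is correct and follows essentially the same route as the paper: the lowering relation comes from commuting $Y_\nu$ past the exponential and using $Y_\nu x^n=[n]_\nu x^{n-1}$, and the raising relation comes from relation \eqref{recc2} with $\mu=-1/\lambda$ (you apply it as a commutator acting on $x^n$, while the paper writes the equivalent conjugation $e^{-Y_\nu^\lambda/\lambda}\,x\,e^{Y_\nu^\lambda/\lambda}=x-Y_\nu^{\lambda-1}$). Your extra remark deriving \eqref{recc2} from $[Y_\nu^\lambda,x]=\lambda Y_\nu^{\lambda-1}$ via the telescoping identity is a sound justification of a step the paper takes for granted, so there is no gap.
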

\begin{proof}
Performing the {\it Dunkl operator} $Y_\nu$
of both sides of \eqref{d2} with respect to $x$, we obtain
\begin{equation}\label{f1}
Y_\nu H^{(\lambda,\nu)}_n(x)=[n]_\nu\,H^{(\lambda,\nu)}_{n-1}(x).
\end{equation}
The equation \eqref{recc2} can be exploited to derive the operational formula
\begin{align*}
e^{-Y_\nu^\lambda/\lambda}x^n&=e^{-Y_\nu^\lambda/\lambda}xe^{Y_\nu^\lambda/\lambda}e^{-Y_\nu^\lambda/\lambda}x^{n-1}= (x-Y_\nu^{\lambda-1})e^{-Y_\nu^\lambda/\lambda}x^{n-1}.
\end{align*}
Hence \begin{equation}\label{d1}
H^{(\lambda,\nu)}_{n}(x)=(x-Y_\nu^{\lambda-1})H^{(\lambda,\nu)}_{n-1}(x).
\end{equation}
\end{proof}
We also obtain
\begin{theo}The polynomials $H^{(\lambda,\nu)}_{n}(x)$ satisfy the following higher order  differential-difference equations
\begin{align*}
&Y_\nu (x-Y^{\lambda-1}_\nu) H^{(\lambda,\nu)}_{n}(x)=[n+1]_\nu H^{(\lambda,\nu)}_{n}(x),\\&
(x-Y^{\lambda-1}_\nu) Y_\nu H^{(\lambda,\nu)}_{n}(x)=[n]_\nu H^{(\lambda,\nu)}_{n}(x).
\end{align*}
\end{theo}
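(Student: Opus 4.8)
The plan is to derive both identities by composing the two first-order relations established in the preceding Proposition, namely the lowering relation $Y_\nu H^{(\lambda,\nu)}_n(x)=[n]_\nu H^{(\lambda,\nu)}_{n-1}(x)$ and the raising relation $(x-Y_\nu^{\lambda-1})H^{(\lambda,\nu)}_n(x)=H^{(\lambda,\nu)}_{n+1}(x)$. No new ingredient is needed beyond these two facts.

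For the second equation I would apply $Y_\nu$ first: by the lowering relation this gives $Y_\nu H^{(\lambda,\nu)}_n(x)=[n]_\nu H^{(\lambda,\nu)}_{n-1}(x)$, and then, applying the raising operator $x-Y_\nu^{\lambda-1}$ and using that $[n]_\nu$ is a constant scalar (independent of $z$, hence commuting past any differential-difference operator), the raising relation at index $n-1$ yields $(x-Y_\nu^{\lambda-1})Y_\nu H^{(\lambda,\nu)}_n(x)=[n]_\nu(x-Y_\nu^{\lambda-1})H^{(\lambda,\nu)}_{n-1}(x)=[n]_\nu H^{(\lambda,\nu)}_n(x)$, which is the claim. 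For the first equation I would reverse the order: the raising relation gives $(x-Y_\nu^{\lambda-1})H^{(\lambda,\nu)}_n(x)=H^{(\lambda,\nu)}_{n+1}(x)$, after which the lowering relation at index $n+1$ produces $Y_\nu(x-Y_\nu^{\lambda-1})H^{(\lambda,\nu)}_n(x)=Y_\nu H^{(\lambda,\nu)}_{n+1}(x)=[n+1]_\nu H^{(\lambda,\nu)}_n(x)$.

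Both arguments are two-line manipulations, so I do not expect a genuine obstacle; the only point that needs a word of care is the scalar nature of $[n]_\nu$, which is what allows it to pass freely through the operator $x-Y_\nu^{\lambda-1}$. As a remark worth including, these two identities present $H^{(\lambda,\nu)}_n$ as an eigenfunction of the two factorizations $Y_\nu(x-Y_\nu^{\lambda-1})$ and $(x-Y_\nu^{\lambda-1})Y_\nu$, with eigenvalues $[n+1]_\nu$ and $[n]_\nu$ respectively --- the differential-difference analogue of the ladder factorization of the classical Hermite operator, and a direct reflection of the relations $[a_-,a_+]$-type structure in Proposition \ref{pro1}. An alternative, slightly longer route would bypass the preceding Proposition and work directly from the operational definition \eqref{d2} together with the commutators in \eqref{recc2}, but the composition argument above is the most economical.
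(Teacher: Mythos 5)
Your proposal is correct and follows essentially the same route as the paper: the paper's proof simply applies $Y_\nu$ to the raising relation \eqref{d1} and invokes the lowering relation \eqref{f1}, which is exactly your composition argument (the paper writes out only the first identity, with the second left implicit as the companion composition). Your slightly fuller treatment of both orderings, and the remark on the ladder factorization, are consistent with and add nothing contradictory to the paper's one-line argument.
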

\begin{proof}
Applying the operator $Y_\nu$ to the both side of \eqref{d1}, we get the following differential-difference equation:
\begin{equation}\label{diff}
Y_\nu (Y^{\lambda-1}_\nu- x) H^{(\lambda,\nu)}_{n}(x)=-[n+1]_\nu H^{(\lambda,\nu)}_{n}(x).\end{equation}
\end{proof}
\begin{Proposition}The polynomials $H^{(\lambda,\nu)}_{n}$ are generated by the series
\begin{equation*}
e^{-t^\lambda/\lambda}\mathcal{E}_\lambda(xt,\nu)=\sum_{n=0}^\infty
H^{(\lambda,\nu)}_{n}(x)\frac{t^n}{[n]_\nu!}.
\end{equation*}
Furthermore,
\begin{equation}\label{H1}
H^{(\lambda,\nu)}_n(x)=\sum_{k=0}^{[\frac{n}{\lambda}]}
\frac{(-1)^k[n]_\nu!}{\lambda^{\lambda k}
\,k!\,[n-k\lambda]_\nu!}\,x^{n-k\lambda}.
\end{equation}
\end{Proposition}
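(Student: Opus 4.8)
The plan is to establish both assertions directly from the operational definition \eqref{d2}, namely $H^{(\lambda,\nu)}_n(x)=e^{-Y_\nu^\lambda/\lambda}x^n$. For the generating function, the natural approach is to form the generating series and interchange the order of summation. First I would compute
\begin{equation*}
\sum_{n=0}^\infty H^{(\lambda,\nu)}_n(x)\frac{t^n}{[n]_\nu!}
 = \sum_{n=0}^\infty \frac{t^n}{[n]_\nu!}\, e^{-Y_\nu^\lambda/\lambda}x^n
 = e^{-Y_\nu^\lambda/\lambda}\!\left(\sum_{n=0}^\infty \frac{(xt)^n}{[n]_\nu!}\right)
 = e^{-Y_\nu^\lambda/\lambda}\,\mathcal{E}_\lambda(xt,\nu),
\end{equation*}
pulling the (degree-lowering, hence on each $x^n$ only finitely many terms survive) operator $e^{-Y_\nu^\lambda/\lambda}$ through the sum — it acts in the $x$-variable only, while $t$ is a parameter. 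It remains to evaluate $e^{-Y_\nu^\lambda/\lambda}\mathcal{E}_\lambda(xt,\nu)$ in closed form. Here I would use \eqref{recc2}, in particular the commutation relation $[Y_\nu,\,e^{\mu x^\lambda}]=\mu\lambda x^{\lambda-1}e^{\mu x^\lambda}$ with $\mu=-1/\lambda$, which upon iteration (or upon noting that $\mathcal{E}_\lambda(xt,\nu)$ is the eigenfunction $f$ with $Y_\nu f = t f$ from Proposition~3 applied in the $x$-variable, so that $Y_\nu^\lambda$ acts as $t^\lambda$ on it) gives $e^{-Y_\nu^\lambda/\lambda}\mathcal{E}_\lambda(xt,\nu) = e^{-t^\lambda/\lambda}\,\mathcal{E}_\lambda(xt,\nu)$. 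That yields the asserted generating identity.

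For the explicit coefficient formula \eqref{H1}, I would expand the exponential operator as a terminating series,
\begin{equation*}
H^{(\lambda,\nu)}_n(x)=e^{-Y_\nu^\lambda/\lambda}x^n
 = \sum_{k=0}^{[n/\lambda]} \frac{(-1)^k}{\lambda^k k!}\,Y_\nu^{k\lambda}x^n,
\end{equation*}
the upper limit $[n/\lambda]$ being forced because $Y_\nu$ lowers degree by one (equation \eqref{Dz}), so $Y_\nu^{k\lambda}x^n=0$ once $k\lambda>n$. Then I would compute $Y_\nu^{k\lambda}x^n$ by iterating \eqref{Dz}: each application of $Y_\nu$ to $z^m$ produces the factor $[m]_\nu$, so $Y_\nu^{k\lambda}x^n = [n]_\nu[n-1]_\nu\cdots[n-k\lambda+1]_\nu\, x^{n-k\lambda} = \frac{[n]_\nu!}{[n-k\lambda]_\nu!}\,x^{n-k\lambda}$, using the deformed factorial \eqref{fact}. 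Substituting back gives exactly \eqref{H1}.

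The only genuine subtlety — and what I would be most careful about — is the justification of pulling $e^{-Y_\nu^\lambda/\lambda}$ through the infinite $t$-series and the evaluation $e^{-Y_\nu^\lambda/\lambda}\mathcal{E}_\lambda(xt,\nu)=e^{-t^\lambda/\lambda}\mathcal{E}_\lambda(xt,\nu)$; everything else is a routine termwise computation. For the interchange I would argue formally as an identity of power series in $t$ (both sides, viewed coefficientwise in $t$, reduce to finite sums of polynomials in $x$, so no analytic convergence issue arises), or alternatively appeal to the disentanglement-type relations of \eqref{recc2} which the paper has already set up for precisely this purpose. The closed-form evaluation is then immediate from the fact that $\mathcal{E}_\lambda(x t,\nu)$ is, by Proposition~3 (read in the variable $x$ with eigenvalue $\varrho=t$), an eigenfunction of $Y_\nu$ with eigenvalue $t$, whence $e^{-Y_\nu^\lambda/\lambda}$ multiplies it by the scalar $e^{-t^\lambda/\lambda}$.
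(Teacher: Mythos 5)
Your proof is correct. For the generating function it follows essentially the paper's route: both arguments pull the degree-lowering operator $e^{-Y_\nu^\lambda/\lambda}$ through the series $\sum_n (xt)^n/[n]_\nu!$ and then evaluate it on $\mathcal{E}_\lambda(xt,\nu)$; the paper simply asserts the final step $e^{-Y_\nu^\lambda/\lambda}\mathcal{E}_\lambda(xt,\nu)=e^{-t^\lambda/\lambda}\mathcal{E}_\lambda(xt,\nu)$, whereas you justify it by the eigenfunction property $Y_\nu\,\mathcal{E}_\lambda(xt,\nu)=t\,\mathcal{E}_\lambda(xt,\nu)$ (Proposition on the initial value problem, read in the variable $x$), which is a genuine improvement in rigor. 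For the explicit coefficient formula your route differs from the paper's: you expand the operator exponential, $e^{-Y_\nu^\lambda/\lambda}x^n=\sum_{k=0}^{[n/\lambda]}\frac{(-1)^k}{\lambda^k k!}Y_\nu^{\lambda k}x^n$, and compute $Y_\nu^{\lambda k}x^n=\frac{[n]_\nu!}{[n-\lambda k]_\nu!}\,x^{n-\lambda k}$ by iterating \eqref{Dz}, so the formula falls out of the operational definition \eqref{d2} without any appeal to the generating function; the paper instead expands the already-proved generating identity $e^{-t^\lambda/\lambda}\mathcal{E}_\lambda(xt,\nu)$ as a double series in $t$ and compares coefficients of $t^n$. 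Both are valid; yours is more direct, the paper's stays entirely at the level of formal power series in $t$. One point worth flagging: your computation produces the factor $\lambda^{k}$ in the denominator, not $\lambda^{\lambda k}$ as printed in the statement. This is not an error on your part --- the paper's own intermediate expansion \eqref{egen1} likewise carries $\lambda^{k}k!$, so the exponent $\lambda k$ in the displayed formula of the proposition is a typographical slip, and your version is the internally consistent one.
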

\begin{proof}
According to \eqref{d1} and \eqref{S1}, we can write
\begin{align*}
\sum_{n=0}^\infty
H^{(\lambda,\nu)}_{n}(x)\frac{t^n}{[n]_\nu!} =\sum_{n=0}^\infty
e^{-Y_\nu^\lambda/\lambda}x^n\frac{t^n}{n!} =e^{-Y_\nu^\lambda/\lambda}
\mathcal{E}_\lambda(xt,\nu) =e^{-t^\lambda/\lambda}\mathcal{E}_\lambda(xt,\nu)
\end{align*}
Hence, the  polynomials $\{H^{(\lambda,\nu)}_{n}(x)\}$ are generated by
\begin{equation}
e^{-t^\lambda/\lambda}\mathcal{E}_\lambda(xt,\nu)=\sum_{n=0}^\infty
H^{(\lambda,\nu)}_{n}(x)\frac{t^n}{[n]_\nu!}.\label{gene}
\end{equation}
An explicit formula for the generalized Hermite polynomials $H^{(\lambda,\nu)}_n(x)$ is obtained by  expanding in power series the generation function given in \eqref{gene}:
\begin{align*}
e^{-t^\lambda/\lambda }\mathcal{E}_\lambda(xt,\nu)&=\sum_{k=0}^\infty \frac{(-1)^k}{\lambda^{ k}}\frac{t^{k\lambda}}{k!}\sum_{m=0}^\infty
\frac{(xt)^m}{[m]_\nu!}\\&=\sum_{k=0}^\infty \sum_{m=0}^\infty
\frac{(-1)^k}{\lambda^{k}k!}\frac{x^m}{[m]_\nu!}t^{k\lambda+m}.
\end{align*}
Substituting $n=k\lambda+m$, then $0\leq k\leq [n/\lambda]$  and we
\begin{align}
e^{-t^\lambda/\lambda}\mathcal{E}_\lambda(xt,\nu)=\sum_{n=0}^\infty \sum_{k=0}^{[\tfrac{n}{\lambda}]}
\frac{(-1)^{k}}{\lambda^{k}k!}\frac{x^{n-k\lambda}}{[n-k\lambda]_\nu!}t^{n}.\label{egen1}
\end{align}
Finally,
\begin{equation}\label{H1}
H^{(\lambda,\nu)}_n(x)=\sum_{k=0}^{[\frac{n}{\lambda}]}
\frac{(-1)^k[n]_\nu!}{\lambda^{\lambda k}k![n-k\lambda]_\nu!}x^{n-k\lambda}.
\end{equation}
\end{proof}
\begin{Proposition}
The polynomials $H^{(\lambda,\nu)}_{n}$ satisfy the following three terms recurrence relations
\begin{equation}
xH^{(\lambda,\nu)}_{n}(x)=H^{(\lambda,\nu)}_{n+1}(x)+[n]_\nu\dots [n-\lambda]_\nu H^{(\lambda,\nu)}_{n-\lambda+1}(x).\label{Rebouh}
\end{equation}
\end{Proposition}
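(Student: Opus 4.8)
The plan is to read \eqref{Rebouh} off the lowering relation \eqref{f1} and the raising relation \eqref{d1} already established in this section. Shifting $n\mapsto n+1$ in \eqref{d1} and rearranging it additively gives
\[
xH^{(\lambda,\nu)}_n(x)=H^{(\lambda,\nu)}_{n+1}(x)+Y_\nu^{\lambda-1}H^{(\lambda,\nu)}_n(x),
\]
so the entire problem reduces to evaluating the single correction term $Y_\nu^{\lambda-1}H^{(\lambda,\nu)}_n(x)$ in closed form.

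To do this I would iterate \eqref{f1}. A one-line induction on $j$ shows that for $0\le j\le n$,
\[
Y_\nu^{\,j}H^{(\lambda,\nu)}_n(x)=\bigl([n]_\nu[n-1]_\nu\cdots[n-j+1]_\nu\bigr)\,H^{(\lambda,\nu)}_{n-j}(x),
\]
the case $j=0$ being trivial and the passage from $j$ to $j+1$ being one further application of \eqref{f1} with $n$ replaced by $n-j$ (here one uses that $Y_\nu$ commutes with $Y_\nu^\lambda$, hence with $e^{-Y_\nu^\lambda/\lambda}$ in \eqref{d2}). Taking $j=\lambda-1$ yields
\[
Y_\nu^{\lambda-1}H^{(\lambda,\nu)}_n(x)=\bigl([n]_\nu[n-1]_\nu\cdots[n-\lambda+2]_\nu\bigr)\,H^{(\lambda,\nu)}_{n-\lambda+1}(x),
\]
a product of $\lambda-1$ consecutive deformed numbers, equal to $[n]_\nu!/[n-\lambda+1]_\nu!$; substituting this into the displayed identity above is precisely \eqref{Rebouh}.

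The argument has no real obstacle; the only content beyond rearranging \eqref{d1} is the routine iteration of \eqref{f1}. Two bookkeeping remarks belong in the write-up. First, for $n\le\lambda-2$ the index $n-\lambda+1$ is negative, so $H^{(\lambda,\nu)}_{n-\lambda+1}\equiv 0$ by the usual convention; this is consistent because the coefficient then carries the factor $[0]_\nu=\widehat{\nu}_0=\sum_{l=0}^{\lambda-1}\nu_l=0$, so \eqref{Rebouh} correctly degenerates to $xH^{(\lambda,\nu)}_n(x)=H^{(\lambda,\nu)}_{n+1}(x)$. Second, if one wants a verification independent of the operator calculus, the same identity follows by comparing coefficients in the explicit expansion \eqref{H1}: after the shift $k\mapsto k-1$ in $xH^{(\lambda,\nu)}_n(x)$, matching the coefficient of $x^{n+1-k\lambda}$ on both sides uses the identity $[n+1]_\nu-[n+1-k\lambda]_\nu=k\lambda$ (which holds since $\widehat{\nu}_m$ depends only on $m$ mod $\lambda$) together with $[n]_\nu!/[n-\lambda+1]_\nu!=[n]_\nu[n-1]_\nu\cdots[n-\lambda+2]_\nu$; organizing the deformed factorials is the only mildly delicate point of that alternative route.
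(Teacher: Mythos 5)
Your argument is correct and is essentially the paper's own proof: the paper likewise obtains \eqref{Rebouh} by combining the lowering relation \eqref{f1} with the raising relation \eqref{d1}, and you have simply written out the iteration $Y_\nu^{\lambda-1}H^{(\lambda,\nu)}_n(x)=[n]_\nu[n-1]_\nu\cdots[n-\lambda+2]_\nu\,H^{(\lambda,\nu)}_{n-\lambda+1}(x)$ that the paper leaves implicit. Your closed form $[n]_\nu!/[n-\lambda+1]_\nu!$ for the coefficient (a product of $\lambda-1$ factors ending at $[n-\lambda+2]_\nu$, not $[n-\lambda]_\nu$) is the correct reading of the loosely written product in \eqref{Rebouh}, consistent with the normalized recurrence \eqref{Rebouh1} where $\alpha_n=\sqrt{[n]_\nu!/[n-\lambda+1]_\nu!}$.
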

\begin{proof}
By taking into account \eqref{f1} and \eqref{d1} we get the three terms recurrence relations
\begin{equation*}
xH^{(\lambda,\nu)}_{n}(x)=H^{(\lambda,\nu)}_{n+1}(x)+[n]_\nu\dots [n-\lambda]_\nu H^{(\lambda,\nu)}_{n-\lambda+1}(x).
\end{equation*}
\end{proof}
\section{Matrix realization of the $C_\lambda $-extended oscillator}
In sequel we assume that $d=\lambda-1$. We propose here a realization  of the $C_\lambda$-extended oscillator by matrices.  \\

Let us first clarify the relationship  between the notion of d-orthogonality of a family of polynomials  and  orthogonality  of vector polynomials.  Recall that $d$-orthogonal polynomials are system $\{P_n(x)\}$ of monic polynomials (with  $\deg P_n = n$)  such that there exists a {\em  vector linear functional} $\,\mathcal{U}=
\begin{bmatrix}
u_{0}  \dots  u _{d-1}
\end{bmatrix}
^{T}$ satisfying the conditions:
\begin{equation}\label{orthd}
\begin{array}{rll}
i) & \langle u_j,x^kP_n\rangle=0 ,& 0\leq k\leq [\frac{n-j}{d}] \\
ii) &  \langle u_j,x^nP_{nd+j}\rangle\neq 0\,, & n=0,\,\dots,\,.
\end{array}%
\end{equation}
where $\langle u,P\rangle$ is the effect of a linear functional $u$ on a polynomial $P$ and $[x]$
denotes the greatest integer function. Note that the case $d = 1$ corresponds to the ordinary notion of orthogonal
polynomials.  According to \cite{Maroni, Iseghem} the vector orthogonality relations is equivalent to the existence of a linear $(d+2)$-term recurrence relation
\begin{equation}
xP_n(x) = P_{n+1} + \sum_{j=0}^{d}a_j (n)P_{n-j}(x) \label{d-ort}
\end{equation}
with constants   $a_j(n)$,  $a_j(d) \neq 0$.

\begin{itemize}
\item  A system of polynomials $\{P_{n}\}$ is said to be $d $\textit{--symmetric} when it verifies%
\begin{equation}
P_{n}(\varepsilon _dx)=\varepsilon_{d+1}^{n}P_{n}(x)\,,\ \ n\geq 0\,,\quad \varepsilon_{d+1}=e^{\frac{2i \pi}{d+1}}.
\label{[Sec2]-d-symmetric-Sn}
\end{equation}%
\item  We say that the vector of linear functionals $\,\mathcal{U}%
=%
\begin{bmatrix}
u _{1},  \,\dots ,\, u _{d}%
\end{bmatrix}%
^{T}$ is said to be $d$\textit{--symmetric} when the moments of its entries
satisfy, for every $n\geq 0$,%
\begin{align}
&(u_j)_{(n+1)d+k-1}:=\langle u _{j},x^{(d+1)n+k-1 }\rangle=0\\&
1\leq j\leq d \quad 1\leq k \leq d+1,\quad j\neq k .  \label{[Sec2]-momL1}
\end{align}
\end{itemize}
According to \cite{DM-JAT-82},  for every sequence of monic polynomials $\{P_{n}\}$, $d$--orthogonal with respect to the \textit{vector of linear functionals } $\mathcal{U}=
\begin{bmatrix}
u_{0}  \dots  u _{d-1}
\end{bmatrix}
^{T}$, the following statements are equivalent :
\begin{itemize}
\item[$(i)$] the vector of linear functionals $\mathcal{U}$ is $d$%
--symmetric;
\item[$(ii)$] the sequence $\{P_{n}\}$ is $d$--symmetric;
\item[$(iii)$] the sequence $\{P_{n}\}$ satisfies
\begin{equation}
xP_{n+d}(x)=P_{n+d+1}(x)+\gamma _{n+1}P_{n}(x),\label{chikha}
\end{equation}%
with $P_{n}(x)=x^{n}$ for $0\leq n\leq d$.
\end{itemize}

Within this context, we obtain the following result on the generalized Hermite polynomials:
\begin{theo}For $d=\lambda-1$, the family $\{H^{(\lambda,\nu)}_{n}(x)\}$ are $d$-orthogonal polynomials with respect to the functionals $u_0,\,\dots,\,u_{d-1}$, which are  determined by their moments:
\begin{equation}
\langle u_k,x^{n\lambda+s}\rangle=\delta_{ks}\int_0^\infty u^{n\lambda+s}\upsilon_s(u)\,du=\frac{[n\lambda+s]_\nu!}{n!\,\lambda^n},
\end{equation}
where \begin{equation}
\upsilon_s(u)=\frac{\lambda u^{-s-1}}{\prod_{i=1}^{\lambda-1}\Gamma(\alpha_i)}G^{\lambda-1 ,0}_{0,\lambda-1
  } \left(\frac{u^\lambda}{\lambda^{\lambda-1}}\, \Bigg|
      \begin{array}{c}
        -\\[0.1cm]
          \Delta(\nu,s)
      \end{array}\right).
\end{equation}
\end{theo}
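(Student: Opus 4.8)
Put $d=\lambda-1$ and $P_n:=H^{(\lambda,\nu)}_n$. From the explicit expansion \eqref{H1} one reads off that $P_n$ is monic of degree $n$ and that $P_n(x)=x^n$ for $0\le n\le d$, since $Y_\nu^{\lambda}$ annihilates $x^m$ for $m<\lambda$. We view $u_0,\dots,u_{d-1}$ first as the linear functionals on $\mathbb{C}[x]$ determined by $\langle u_k,x^m\rangle=0$ for $m\not\equiv k\Mod{\lambda}$ and $\langle u_k,x^{n\lambda+k}\rangle=[n\lambda+k]_\nu!/(n!\,\lambda^{n})$ for $n\ge0$; the assertion that these moments are produced by the displayed Meijer $G$-integrals will be isolated and checked at the end. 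What must be proved is the pair of conditions \eqref{orthd}: that $\langle u_k,x^{\ell}P_n\rangle=0$ whenever $\ell d+k<n$, and that $\langle u_k,x^{n}P_{nd+k}\rangle\neq0$ for every $n\ge0$.

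The engine of the argument is a generating-function identity. Applying $u_k$ coefficient-wise in $x$ to $\mathcal{E}_\lambda(xt,\nu)=\sum_{m\ge0}(xt)^m/[m]_\nu!$ leaves only the exponents $m=n\lambda+k$, and the deformed factorials cancel:
\[
\langle u_k,\mathcal{E}_\lambda(xt,\nu)\rangle=\sum_{n\ge0}\frac{t^{n\lambda+k}}{[n\lambda+k]_\nu!}\cdot\frac{[n\lambda+k]_\nu!}{n!\,\lambda^{n}}=t^{k}\,e^{t^{\lambda}/\lambda}.
\]
Multiplying by $e^{-t^{\lambda}/\lambda}$ and invoking the generating function \eqref{gene}, namely $e^{-t^{\lambda}/\lambda}\mathcal{E}_\lambda(xt,\nu)=\sum_{n\ge0}P_n(x)\,t^{n}/[n]_\nu!$, we obtain $\sum_{n\ge0}\langle u_k,P_n\rangle\,t^{n}/[n]_\nu!=t^{k}$, so that
\[
\langle u_k,P_n\rangle=[k]_\nu!\,\delta_{k,n}\qquad(0\le k\le d,\ n\ge0).
\]

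Next I upgrade this to $x^{\ell}P_n$ by induction on $\ell$, using the recurrence \eqref{Rebouh} written as $xP_n=P_{n+1}+\gamma_n\,P_{n-\lambda+1}$, where $\gamma_n$ is a nonzero product of deformed numbers under the standing hypotheses on $\nu$ (its factors carry indices in $\{1,\dots,n\}$, as one sees by iterating $Y_\nu P_n=[n]_\nu P_{n-1}$ in $P_{n+1}=(x-Y_\nu^{\lambda-1})P_n$). The case $\ell=0$ is the identity just obtained. Assuming $\langle u_k,x^{\ell-1}P_m\rangle=0$ for every $m>(\ell-1)d+k$, the relation
\[
\langle u_k,x^{\ell}P_n\rangle=\langle u_k,x^{\ell-1}P_{n+1}\rangle+\gamma_n\,\langle u_k,x^{\ell-1}P_{n-\lambda+1}\rangle
\]
makes both summands vanish as soon as $n>\ell d+k$, since then $n+1>(\ell-1)d+k$ and $n-\lambda+1=n-d>(\ell-1)d+k$; this is condition (i). At the boundary $n=\ell d+k$ the first summand still vanishes while the second reduces to $\gamma_n\langle u_k,x^{\ell-1}P_{(\ell-1)d+k}\rangle$, so unrolling the recursion down to $\ell=0$ gives $\langle u_k,x^{\ell}P_{\ell d+k}\rangle=\big(\prod_{j=1}^{\ell}\gamma_{jd+k}\big)[k]_\nu!\neq0$, which is condition (ii). Hence $\{H^{(\lambda,\nu)}_n\}$ is $d$-orthogonal with respect to $\mathcal{U}=[u_0,\dots,u_{d-1}]^{T}$; since its recurrence has the shape \eqref{chikha}, this also fits the $d$-symmetric scheme of \cite{DM-JAT-82}.

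It remains to recognize the prescribed moments in the integrals. For $0\le s\le d$, the substitution $t=u^{\lambda}$ in $\int_0^{\infty}u^{n\lambda+s}\upsilon_s(u)\,du$, together with $u^{n\lambda+s}u^{-s-1}\,du=\tfrac{1}{\lambda}\,t^{\,n-1}\,dt$, turns it into a Mellin transform of $G^{\lambda-1,0}_{0,\lambda-1}$ which \eqref{integ} evaluates; as there are no upper parameters and all $\lambda-1$ lower parameters are taken up, only the factors $\prod_i\Gamma\big(\Delta(\nu,s)_i+n\big)$ remain. Dividing by $\prod_{i=1}^{\lambda-1}\Gamma(\alpha_i)$ reproduces the product $\prod_{k=1}^{s}(\alpha_k)_{n+1}\prod_{k=s+1}^{\lambda-1}(\alpha_k)_{n}$ that appears in the factorization of $[n\lambda+s]_\nu!$ obtained while proving \eqref{H21}, and keeping track of the powers of $\lambda$ returns $[n\lambda+s]_\nu!/(n!\,\lambda^{n})$. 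I expect the main obstacle to be exactly this constant-chasing: the prefactor of $\upsilon_s$ and the entries of $\Delta(\nu,s)$ have to be read so that the Meijer $G$-evaluation matches the normalization of the deformed factorial on the nose (and positivity of $\upsilon_s$ needs the usual restrictions forcing $\alpha_k>0$); by contrast the algebraic core — the two generating-function identities and the short induction — is routine.
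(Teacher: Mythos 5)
Your argument is correct (under the same tacit regularity assumption the paper uses, namely $[n]_\nu\neq 0$ for all $n\geq 1$, so that the factorials and the recurrence coefficients $\gamma_m=[m]_\nu[m-1]_\nu\cdots[m-\lambda+2]_\nu$ do not vanish), but it follows a genuinely different route from the paper. The paper never verifies the orthogonality conditions directly: it gets the \emph{existence} of the functionals $u_0,\dots,u_{d-1}$ for free from the $(d+2)$-term recurrence \eqref{Rebouh} via the Douak--Maroni characterization \eqref{chikha} of $d$-symmetric $d$-orthogonal sequences \cite{DM-JAT-82}, and then \emph{computes} their moments by Maroni's method from the inversion formula \eqref{H1} (itself extracted from the generating function \eqref{gene}), before matching them with the Meijer $G$-integral through \eqref{integ}. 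You instead \emph{define} $u_k$ by the stated moments and prove $d$-orthogonality from scratch: your identity $\langle u_k,\mathcal{E}_\lambda(xt,\nu)\rangle=t^k e^{t^\lambda/\lambda}$, hence $\langle u_k,H^{(\lambda,\nu)}_n\rangle=[k]_\nu!\,\delta_{kn}$, is the dual form of the paper's inversion formula, and your induction on $\ell$ through $xP_n=P_{n+1}+\gamma_nP_{n-\lambda+1}$ replaces the appeal to \cite{DM-JAT-82}. What your route buys is self-containedness and the explicit nondegeneracy constant $\langle u_k,x^\ell P_{\ell d+k}\rangle=[k]_\nu!\prod_{j=1}^{\ell}\gamma_{jd+k}$; what the paper's route buys is brevity and the placement of the family inside the $d$-symmetric framework, together with the inversion formula as a by-product. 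One caveat on the final step, which you rightly flagged as the delicate point: with $\upsilon_s$ exactly as printed, the substitution $t=u^\lambda$ and \eqref{integ} give $\int_0^\infty u^{n\lambda+s}\upsilon_s(u)\,du=\lambda^{-s}\,[n\lambda+s]_\nu!/(n!\,\lambda^n)$, not the stated value; this factor $\lambda^{s}$ is already visible in the paper's own intermediate evaluation (the prefactor of $\upsilon_s$ should be $\lambda^{s+1}u^{-s-1}$ rather than $\lambda u^{-s-1}$), so the mismatch lies in the statement's normalization rather than in your method, but your claim that the constants match ``on the nose'' should be tempered accordingly.
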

\begin{proof}From \eqref{Rebouh}, the polynomials $\{H^{(\lambda,\nu)}_{n}(x)\}$ satisfy a $(\lambda+1)$-term recursion relation of the form  \eqref{chikha} and then, are $d$-orthogonal $d$-symmetric  polynomials.  Furthermore  there exist $\lambda -1$  symmetric functionals $u_k, k =0,\ldots,d-1$, on the space of all polynomials  $\mathcal{P}$ such that
\begin{equation}\label{tictac}
\begin{cases}
u_k (P_mP_n)    = 0, m <[\frac{n-k}{d}], \\
u_k(P_nP_{n(d+1)+k})    \neq 0,\, n   \geq 0.
\end{cases}
\end{equation}
To determine the moments of the  functional $u_k$, we will need the following inversion formula
\begin{equation}\label{H1}
x^m=\sum_{n=0}^{[\tfrac{m}{\lambda}]}
\frac{[m]_\nu!}{\lambda^{n}n![m-n\lambda]_\nu!}H^{(\lambda,\nu)}_{m-n\lambda}(x).
\end{equation}
To prove  \eqref{H1}, we can use the generating function \eqref{gene} to first  obtain
\begin{equation}
\mathcal{E}_\lambda(xt,\nu)=\sum_{n,k=0}^\infty
\frac{H^{(\lambda,\nu)}_{k}(x)}{\lambda^{n}n![k]_\nu!}t^{n\lambda+k},\label{ge}
\end{equation}
then, substituting $m=n\lambda+k$, then $0\leq n\leq [m/\lambda]$,  we get
\begin{align}
\mathcal{E}_\lambda(xt,\nu)=\sum_{m=0}^\infty \sum_{n=0}^{[\tfrac{m}{\lambda}]}
\frac{H^{(\lambda,\nu)}_{m-n\lambda}(x)}{\lambda^{n}n![m-n\lambda]_\nu!}t^{m}.\label{g2}
\end{align}
and it remains to compare the coefficient of $t^m$ in the two expansion \eqref{E1} and \eqref{g2} of $\mathcal{E}_\lambda(xt,\nu)$.\\
Following Maroni \cite{Maroni},  the moments of the  functionals $u_k, k =0,\ldots,d-1$ related to $\{H^{(\lambda,\nu)}_{n}(x)\}$ can be obtained from the inversion formula \eqref{H1}
\begin{equation}
(u_k)_{n\lambda+s}=\langle u_k,x^{n\lambda+s}\rangle=\delta_{ks}\frac{[n\lambda+s]_\nu!}{n!\lambda^{n}}.\label{mm}
\end{equation}
It remains  to prove the integral representation of the functionals $u_k$, so it suffices to substitute
 \begin{align*}&p=s=0,\quad q=r=\lambda-1,\\&\gamma_i=\alpha_i+1, i=1,\dots, s,\quad  \gamma_i=\alpha_i,\quad i=s+1,\dots \lambda-1,\end{align*} in the integral \eqref{integ}, in order to get
 \begin{align*}
\int_{0}^\infty x^{n-1} G^{\lambda-1 ,0}_{0,\lambda-1
  } \left(\frac{x}{\lambda^{\lambda-1}}\, \Bigg|
      \begin{array}{c}
        -\\[0.1cm]
         \Delta(\nu,s)
      \end{array}\right)\,dx\nonumber\
     =\lambda^{-s}\prod_{i=1}^{\lambda-1}\Gamma(\alpha_i)\frac{[n\lambda+s]_\nu!}{n!\lambda^n}.
\end{align*}
The latter yields
\begin{equation}
\int_0^\infty u^{n\lambda+s}\upsilon_s(u)\,du=\frac{[n\lambda+s]_\nu!}{n!\lambda^n},
\end{equation}
where \begin{equation}
\upsilon_s(u)=\frac{\lambda u^{-s-1}}{\prod_{i=1}^{\lambda-1}\Gamma(\alpha_i)}G^{\lambda-1 ,0}_{0,\lambda-1
  } \left(\frac{u^\lambda}{\lambda^{\lambda-1}}\, \Bigg|
      \begin{array}{c}
        -\\[0.1cm]
          \Delta(\nu,s)
      \end{array}\right),
\end{equation}
so that
\begin{equation}
\langle u_k,x^{n\lambda+s}\rangle=\delta_{ks}\int_0^\infty u^{n\lambda+s}\upsilon_s(u)\,du=\frac{[n\lambda+s]_\nu!}{n!\lambda^n}.
\end{equation}
\end{proof}
 Let $\{P_{n}\}$ be a $d$-symmetric
family of $d$-orthogonal polynomials with respect  to the vector of functionals $\,\mathcal{U}=\big[u_0,\dots,u_{d-1}\big]^T$ . We define the family of  vector polynomials  $\{\mathbb{P}_{n}\}$ related to $\{P_{n}\}$  as follows :
\begin{equation}\mathbb{P}_n=\big[P_{nd},\dots,P_{(n+1)d-1}\big]^T \end{equation}
and we extend the action of the vector of functionals $\,\mathcal{U}$  in vector polynomials $\{\mathbb{P}_n\}$ as follows:
\begin{equation*}
\mathcal{U}(\mathbb{P})=%
\begin{bmatrix}
u_{0}(P_{1}) & \cdots & u_{d-1}(P_{1}) \\
\vdots & \ddots & \vdots \\
u^{0}(P_{d}) & \cdots & u_{d-1}(P_{d})%
\end{bmatrix}%
\end{equation*}
Within
this context, $\{\mathbb{P}_{n}\}$ is said to be a vector
orthogonal polynomial sequence with respect to the vector of functionals $%
\mathcal{U}$, if
\begin{equation}
\left\{
\begin{array}{rll}
i) & (x^{k}\mathcal{U})(\mathcal{P}_{n})=0_{d\times d}\,, & k=0,1,\ldots
,n-1\,, \\
ii) & (x^{n}\mathcal{U})(\mathcal{P}_{n})=\Delta _{n}\,, &
\end{array}%
\right.
  \label{Orth}
\end{equation}
where $\Delta _{n}$ is a regular upper triangular $d\times d$ matrix.\\

We can see easily that the  $d$-orthogonality of a family of polynomials $\{P_n\}$ defined in \eqref{orthd} is equivalent to the vector orthgonality of the related family $\{\mathbb{P}_n\}$ defined as above.  For a deeper account of the
theory (in a more general framework, considering quasi--diagonal
multi--indices) we refer the reader to \cite{Branquinho}.\\

\indent  Let consider the  family of vector polynomials  $\{\mathbb{H}^{(\lambda,\nu)}_n(x)\}$  given by
\begin{equation}
\mathbb{H}^{(\lambda,\nu)}_n(x)=\big[\widetilde{H}^{(\lambda,\nu)}_{nd}(x),\,\dots,\, \widetilde{H}^{(\lambda,\nu)}_{(n+1)d-1}(x)\big]^T,\quad n\in \mathbb{N}.
\end{equation}
where
\begin{equation}
\widetilde{H}^{(\lambda,\nu)}_{n}(x)=([n]_\nu!)^{-1/2}H^{(\lambda,\nu)}_{n}(x).
\end{equation}
From \eqref{Rebouh},  we easily check  the three terms recurrence relations
\begin{equation}
x\widetilde{H}^{(\lambda,\nu)}_{n}(x)=\sqrt{[n+1]_\nu}\widetilde{H}^{(\lambda,\nu)}_{n+1}(x)+\alpha_n \widetilde{H}^{(\lambda,\nu)}_{n-\lambda+1}(x).\label{Rebouh1}
\end{equation}
where \begin{equation}
\alpha_n=\sqrt{\frac{[n]_\nu!}{[n-\lambda+1]_\nu!}}.
\end{equation}
We also extend  the action of the operator $Y_\nu$ on the vector polynomial  $\mathbb{H}^{(\lambda,\nu)}_n(x)$
through
$$  Y_\nu \mathbb{H}^{(\lambda,\nu)}_n(x)=\big[Y_\nu \widetilde{H}^{(\lambda,\nu)}_{nd}(x),\,\dots,\, Y_\nu  \widetilde{H}^{(\lambda,\nu)}_{(n+1)d-1}(x)\big]^T. $$

\begin{theo}Under the above notations, we have
\begin{align}
&x\mathbb{H}^{(\lambda,\nu)}_n=A_n\mathbb{H}^{(\lambda,\nu)}_{n+1}+B_n\mathbb{H}^{(\lambda,\nu)}_n+C_{n}\mathbb{H}^{(\lambda,\nu)}_{n-1}\,,\ \ n=0,1,\ldots , \label{sss}\\&
Y_\nu \mathbb{H}^{(\lambda,\nu)}_n=A^T_{n-1}\mathbb{H}^{(\lambda,\nu)}_{n-1}+B^T_n\mathbb{H}^{(\lambda,\nu)}_n\,,\ \ n=0,1,\ldots  \label{sss}\\&
S\mathbb{H}^{(\lambda,\nu)}_n=R_n\mathbb{H}^{(\lambda,\nu)}_n.
\end{align}
with $\mathbb{H}^{(\lambda,\nu)}_{-1}=
\begin{bmatrix}
0, & \cdots, & 0%
\end{bmatrix}%
^{T}$, $\mathbb{H}^{(\lambda,\nu)}_{0}=%
\begin{bmatrix}
\widetilde{H}^{(\lambda,\nu)}_{0}(x), & \cdots, & \widetilde{H}^{(\lambda,\nu)}_{d-1}(x)%
\end{bmatrix}%
^{T}$, and matrix coefficients $A_n$, $B_n$, $C_{n}, $ $R_n \in \mathcal{M}_{d\times d}$
given by
\begin{eqnarray*}
A_n &=&%
\begin{bmatrix}
0 & 0 & \cdots & 0 \\
\vdots & \vdots & \ddots & \vdots \\
0 & 0 & \cdots & 0 \\
\sqrt{[(n+1)d]_\nu} & 0 & \cdots & 0%
\end{bmatrix}%
\\  B_n&=&%
\begin{bmatrix}
0 & \sqrt{[nd+1]_\nu} &  &  \\
& \ddots & \ddots &  \\
&  & 0 & \sqrt{[nd+d-1]_\nu} \\
&  &  & 0%
\end{bmatrix}%
  \label{[Sec2]-3TTR-MCoefs} \\
C_{n} &=&\operatorname{diag}\,[\gamma _{nd},\,\gamma _{nd+1},\ldots ,\gamma _{(n+1)d-1}],\\
R&=&\operatorname{diag}\,[1,\,\varepsilon_\lambda,\ldots ,\varepsilon^{\lambda-1}_\lambda]
\notag
\end{eqnarray*}
\end{theo}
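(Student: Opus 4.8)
The plan is to obtain the three block identities by unpacking, block by block, the scalar relations already proved for the normalized polynomials $\widetilde{H}^{(\lambda,\nu)}_n=([n]_\nu!)^{-1/2}H^{(\lambda,\nu)}_n$, and to read off the matrix entries from the way the index shifts interact with the partition of $\mathbb{N}$ into consecutive blocks of length $d=\lambda-1$. Concretely, I would write a generic slot of $\mathbb{H}^{(\lambda,\nu)}_n$ as $\widetilde{H}^{(\lambda,\nu)}_{nd+p}$ with $0\le p\le d-1$, apply $x$, $Y_\nu$ or $S$ to it, and match the result against the asserted right-hand sides.

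First I would record the three scalar ingredients. From $Y_\nu H^{(\lambda,\nu)}_n=[n]_\nu H^{(\lambda,\nu)}_{n-1}$ (Proposition~9) together with the normalization, a one-line computation gives $Y_\nu\widetilde{H}^{(\lambda,\nu)}_n=\sqrt{[n]_\nu}\,\widetilde{H}^{(\lambda,\nu)}_{n-1}$. The $(\lambda+1)$-term recurrence \eqref{Rebouh1} already reads $x\widetilde{H}^{(\lambda,\nu)}_n=\sqrt{[n+1]_\nu}\,\widetilde{H}^{(\lambda,\nu)}_{n+1}+\alpha_n\widetilde{H}^{(\lambda,\nu)}_{n-\lambda+1}$ with $\alpha_n=\sqrt{[n]_\nu!/[n-\lambda+1]_\nu!}$. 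Finally, the $d$-symmetry of $\{H^{(\lambda,\nu)}_n\}$ --- equivalently, the fact that in the explicit expansion \eqref{H1} only the exponents $n-k\lambda$ occur and $\varepsilon_\lambda^\lambda=1$ --- yields $S\widetilde{H}^{(\lambda,\nu)}_n=\varepsilon_\lambda^n\widetilde{H}^{(\lambda,\nu)}_n$.

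Next I would perform the substitution $n\mapsto nd+p$, $0\le p\le d-1$, in each of these, using $\lambda-1=d$ so that $n\mapsto n-\lambda+1$ becomes $nd+p\mapsto(n-1)d+p$, while $n\mapsto n+1$ keeps slot $p$ inside block $n$ for $p\le d-2$ and moves to slot $0$ of block $n+1$ for $p=d-1$ (symmetrically for $n\mapsto n-1$). For \eqref{sss}: the $\sqrt{[nd+p+1]_\nu}$ contribution fills the super-diagonal of $B_n$ when $p\le d-2$ and yields the single entry $\sqrt{[(n+1)d]_\nu}$ in position $(d-1,0)$ of $A_n$ when $p=d-1$; the $\alpha_{nd+p}$ contribution lands in slot $p$ of block $n-1$, producing the diagonal $C_n=\operatorname{diag}[\gamma_{nd},\dots,\gamma_{(n+1)d-1}]$ with $\gamma_{nd+p}=\alpha_{nd+p}$. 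For \eqref{sss} (second line): $Y_\nu\widetilde{H}^{(\lambda,\nu)}_{nd+p}=\sqrt{[nd+p]_\nu}\,\widetilde{H}^{(\lambda,\nu)}_{nd+p-1}$ sits in slot $p-1$ of block $n$ when $p\ge1$, and since $(B_n)_{p-1,p}=\sqrt{[nd+p]_\nu}$ these contributions reassemble into $B_n^T\mathbb{H}^{(\lambda,\nu)}_n$; for $p=0$ it sits in slot $d-1$ of block $n-1$, and $(A_{n-1})_{d-1,0}=\sqrt{[nd]_\nu}$ makes this exactly the entry of $A_{n-1}^T\mathbb{H}^{(\lambda,\nu)}_{n-1}$. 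For the last identity the substitution is immediate and gives the diagonal matrix $R_n=\operatorname{diag}[\varepsilon_\lambda^{nd},\varepsilon_\lambda^{nd+1},\dots,\varepsilon_\lambda^{(n+1)d-1}]$, which modulo $\varepsilon_\lambda^\lambda=1$ is the asserted $R$ up to the scalar phase $\varepsilon_\lambda^{nd}$.

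There is no deep obstacle: the argument is bookkeeping built on \eqref{Rebouh1}, Proposition~9, and $d$-symmetry. The care-requiring points are the boundary slots $p=0$ and $p=d-1$ of each block, where a term hops to a neighbouring block --- one must check that these match the $(d-1,0)$ entry of $A_n$ and, crucially, that the off-diagonal of $B_n$ in the first identity is precisely the transpose of the data governing $Y_\nu$ in the second, so that the $A_{n-1}^T$, $B_n^T$ there come out correctly. I would also spell out the short normalization computation giving $Y_\nu\widetilde{H}^{(\lambda,\nu)}_n=\sqrt{[n]_\nu}\,\widetilde{H}^{(\lambda,\nu)}_{n-1}$ and the identification $\gamma_{nd+p}=\alpha_{nd+p}$, since $B_n$ and $C_n$ depend on them.
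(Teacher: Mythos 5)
Your proposal is correct, and it coincides with the intended argument: the paper states this theorem without any written proof, precisely because it is the slot-by-slot bookkeeping you describe, unpacking the recurrence \eqref{Rebouh1}, the lowering relation $Y_\nu\widetilde{H}^{(\lambda,\nu)}_n=\sqrt{[n]_\nu}\,\widetilde{H}^{(\lambda,\nu)}_{n-1}$, and the $d$-symmetry $S\widetilde{H}^{(\lambda,\nu)}_n=\varepsilon_\lambda^{n}\widetilde{H}^{(\lambda,\nu)}_n$ over blocks of length $d$, with the boundary slots $p=0$ and $p=d-1$ producing exactly the corner entry of $A_n$ and its transpose in the $Y_\nu$ relation. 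Your remark on the last identity is also a genuine (minor) correction to the statement: what the computation yields is $R_n=\varepsilon_\lambda^{nd}\operatorname{diag}\,[1,\varepsilon_\lambda,\ldots,\varepsilon_\lambda^{d-1}]$, a $d\times d$ matrix with an $n$-dependent phase, rather than the $n$-independent list of $\lambda$ entries printed in the theorem.
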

The position operator $x$ can be identified with the following  Bloc matrix 
\begin{equation*}
\mathbb{X}=%
\begin{bmatrix}
B_0 & C_1 &  &  &  \\
A_{0} & B_1 & C_2 &  &  \\
& A_{1} & B_2 & C_3 &  \\
&  & \ddots & \ddots & \ddots%
\end{bmatrix}%
\end{equation*}
and the momentum operator $Y_\nu$ with
\begin{equation*}
\mathbb{Y}=%
\begin{bmatrix}
B^T_0 & A^T_0 &  &  &  \\
0 \,& B^T_1 & A^T_1 &  &  \\
&\, 0\, & B^T_2 & A^T_2 &  \\
&  & \ddots & \ddots & \ddots%
\end{bmatrix}%
\end{equation*}

\end{document}